\renewcommand\acknowledgments{\section*{Acknowledgements}}
\tikzset{middlearrow/.style={
        decoration={markings,
            mark= at position 0.5 with {\arrow{#1}} ,
        },
        postaction={decorate}
    }
}
\newcommand{\fL}{\ensuremath{\mathfrak{L}}}
\newcommand{\cm}{\circ}
\DeclareMathOperator{\id}{id}
\newcommand{\ket}[1]{|#1\rangle}
\newcommand{\bra}[1]{\langle#1|}
\newtheorem{theorem}{Theorem}
\theoremstyle{definition}
\newtheorem{definition}{Definition}
\title{Out-of‐time‐order asymptotic observables are quasi‐isomorphic to time-ordered amplitudes}
\author[a]{Leron Borsten,}
\author[a]{D. Simon H. Jonsson,}
\author[a,1]{Hyungrok Kim\note{Corresponding author.}}
\affiliation[a]{Department of Physics, Astronomy and Mathematics, University of Hertfordshire, Hatfield, Hertfordshire, AL10 9AB, United Kingdom}
\emailAdd{l.borsten@herts.ac.uk}
\emailAdd{d.jonsson@herts.ac.uk}
\emailAdd{h.kim2@herts.ac.uk}
\abstract{
  Asymptotic observables in quantum field theory beyond the familiar $S$-matrix have recently attracted much interest, for instance in the context of gravity waveforms. Such observables can be understood in terms of Schwinger--Keldysh-type `amplitudes' computed by a set of modified Feynman rules involving cut internal legs and external legs labelled by time-folds. 

  In parallel, a \emph{homotopy-algebraic} understanding of perturbative quantum field theory has emerged in recent years. In particular, passing through homotopy transfer, the $S$-matrix of a perturbative quantum field theory can be understood as the minimal model of an associated (quantum) $L_\infty$-algebra. 

  Here we bring these two developments together. In particular, we show that Schwinger--Keldysh amplitudes are naturally encoded in an \(L_\infty\)-algebra, similar to ordinary scattering amplitudes. As before, they are computed via homotopy transfer, but using deformation-retract data that are \emph{not} canonical (in contrast to the conventional $S$-matrix). We further show that the \(L_\infty\)-algebras encoding Schwinger--Keldysh amplitudes and ordinary amplitudes are quasi-isomorphic (meaning, in a suitable sense, equivalent). This entails a set of recursion relations that enable one to compute Schwinger--Keldysh amplitudes in terms of ordinary amplitudes or vice versa.
  }
\begin{document}
\maketitle
\flushbottom

\section{Introduction and summary}
We cast the panoply of generalised asymptotic observables arising in quantum field theory (whenever the $S$-matrix is well-defined) in terms of homotopy algebras and homotopy transfer. The former is of much current interest in the scattering amplitudes (meets gravitational waves) community, while the latter is a natural mathematical language for perturbative quantum field theory. By bringing them together, we hope to shed new light on both facets. To orientate our diverse intended audiences we begin with an expanded introduction, briefly reviewing generalised asymptotic observables in \cref{Generalised_asymptotic_observables}, $L_\infty$-algebras and quantum field theory in \cref{algebras_qft_scattering}, and $L_\infty$-algebras and scattering amplitudes in \cref{algebras_and_scattering_amplitudes}, before summarising the results of the present contribution in \cref{algebras_generalised_observables}.

\subsection{Generalised asymptotic observables}\label{Generalised_asymptotic_observables}
Generic quantum field theories admit a rich set of asymptotic observables that go beyond the familiar \(S\)-matrix as recently elucidated in \cite{Caron-Huot:2023vxl}. A motivating and well-known class of such asymptotic observables are waveforms, which typically involve the expectation value of some operator $\mathcal{O}$ (e.g.~the gravitational field) in the asymptotic future due to a scattering of some states $\ket{\psi}_{\text{in}}$ (e.g.~two black holes) propagating from the far past. Using the Kosower--Maybee--O'Connell formalism \cite{Kosower:2018adc}, classical waveforms of observational relevance can be extracted from scattering-amplitude-like forms using the \emph{in-in} prescription,
\begin{equation}\label{waveform}
  \langle \mathcal{O} \rangle = {}_{\text{in}}\bra{\psi}\mathcal{O} \ket{\psi}_{\text{in}}.
\end{equation}
The appearance of the in-in formalism is eminently reasonable since the expectation value of the observable cares not of the ultimate fate of the in-going states.

For suitable in-states (e.g.~wave packets over free plane-wave states) and observables, \eqref{waveform} can be computed in terms of   free-vacuum $\ket{0}$ matrix elements,
\begin{equation}
    {}_{\text{in}}\bra{\psi}\mathcal{O} \ket{\psi}_{\text{in}} \sim \int f(q_1,q_2, p_1,p_2)\langle 0|a_{q_1}^{\text{in}} a_{q_2}^{\text{in}} b^{\text{out}} a_{p_1}^{ \text{in}\dagger} a_{p_2}^{\text{in}\dagger}| 0\rangle+ \text{c.c.},
\end{equation}
where $a, a^\dagger$ and $b, b^\dagger$ are the creation--annihilation operators associated to the in-state $\ket{\psi}_{\text{in}}$ and out-operator $\mathcal{O}$, respectively\footnote{We have thrown all other data into $f$, which includes the wave packets, polarisation tensors and so forth.} \cite{Cristofoli:2021vyo}.

The key object
\begin{equation}\label{5point}
    \langle 0|a_{q_1}^{\text{in}} a_{q_2}^{\text{in}} b^{\text{out}} a_{p_1}^{ \text{in}\dagger} a_{p_2}^{\text{in}\dagger}| 0\rangle
\end{equation}
is close in form to a conventional 5-point $S$-matrix element, except that the in-state annihilation operators, $a^{\text{in}}$, appear in place of the expected out-state annihilation operators, $a^{\text {out}}$. In fact, \eqref{5point} can be expressed in terms of the familiar 5-point amplitude together with additional \emph{cut amplitudes} \cite{Caron-Huot:2023vxl}. It is a generalised, or out-of-time-order, scattering amplitude in the sense that it exclusively involves asymptotic creation--annihilation operators and the free vacuum, but with the usual stipulation that in-operators appear exclusively to the right of the out-operators relaxed. The central contention of \cite{Caron-Huot:2023vxl} is that \eqref{5point} is but one example of a wealth of generalised/out-of-time-order scattering amplitudes that encode  measurable observables of interest. Classical gravitational waveforms relevant to detectors such as LIGO (\cite{LIGOScientific:2016aoc}) provide highly topical  examples.

A general prescription for the computation of such generalised scattering amplitudes was also given in \cite{Caron-Huot:2023vxl}. Essentially, they arise from a generalised Lehmann--Symanzik--Zimmermann (LSZ) reduction of out-of-time-order correlators in the Schwinger--Keldysh formalism\footnote{See \cite{Chou:1984es, Haehl:2016pec,BenTov:2021jsf} for reviews of  the Schwinger--Keldysh formalism.}. This perspective yields a Feynman-diagrammatic approach in which the expected rules are supplemented by \emph{cut propagators} of the form \(2\pi\mathrm i\delta(p^2-m^2)\Theta(p^0)\) that connect the time-folds of the Schwinger--Keldysh formalism \cite{Caron-Huot:2023vxl}. Of course, out-of-time-order correlators themselves have a long and important history, with applications ranging over condensed matter \cite{larkin1969quasiclassical, Sachdev_1993, cond-mat/0412296,cond-mat/0506130,Kita_2010}, quantum computation, black holes and quantum chaos \cite{larkin1969quasiclassical, Hayden:2007cs, Hosur:2015ylk, Shenker:2013pqa, Maldacena_2016,Heyl_2018}, cosmology \cite{Haque:2020pmp}, and the AdS/CFT correspondence \cite{Kitaev:2017awl, Maldacena:2016hyu, Caron-Huot:2022lff}. It is natural, therefore, to expect their LSZ reduction to entail nontrivial insights.

This paper shows that the generalised scattering amplitudes naturally package themselves into an \(L_\infty\)-algebra with a deformed homotopy retract, generalising the relationship between conventional scattering amplitudes and \(L_\infty\)-algebras, the subject we turn to now.

\subsection{Homotopy algebras and quantum field theory}\label{algebras_qft_scattering}
Ordinary scattering amplitudes admit a natural presentation \cite{Jurco:2018sby,Jurco:2019bvp,Arvanitakis:2019ald,Macrelli:2019afx,Jurco:2019yfd,Jurco:2020yyu} in terms of an $L_\infty$-algebra and homotopy transfer (see e.g.~\cite{Loday:2012aa} for a review of the latter). The starting point of this observation is the connection between the Batalin--Vilkovisky (BV) approach to quantising gauge theories and \emph{homotopy} Lie algebras or $L_\infty$-algebras. Here we give a lightning review of this correspondence; see e.g.~\cite{Jurco:2018sby} for a detailed treatment. This discussion will be rather heuristic; our precise $L_\infty$-algebra conventions are given in \cref{conventions}.

Let us begin by briefly recalling the (classical) BV formalism \cite{Batalin:1977pb,Batalin:1981jr,Batalin:1984jr,Batalin:1984ss,Batalin:1985qj,Schwarz:1992nx}. See \cite{Henneaux:1992, Gomis:1994he, Jurco:2018sby} for pedagogical reviews.
For a gauge theory with physical (and potentially also auxiliary) fields $\varphi$ and ghost fields $c$ (corresponding to the gauge symmetries) one adds a set of antifields $\varphi^+, c^+$ and forms the classical BV action
\begin{equation}
    S_{\text{BV}}[\varphi, c, \varphi^+, c^+]= S_{\text{classical}}[\varphi] + S[\varphi, c, \varphi^+, c^+],
\end{equation}
where $S_{\text{classical}}[\varphi]$ is the usual classical action. Yang--Mills theory on a $d$-dimensional spacetime manifold $M$ with colour Lie algebra $\mathfrak{g}$ provides the paradigmatic example. It consists of a $\mathfrak{g}$-valued 0-form ghost field $c$, gauge potential 1-form $A$, $(d-1)$-form antifield $A^+$ and $d$-form antifield ghost $c^+$, with BV action given by
\begin{equation}
  \begin{split}\label{YMBV}
    S_{\text{BV}}[A, c, A^+, c^+]&=\mathrm{tr}\int_M\frac{1}{2} F \wedge \star F - A^{+}\wedge D c+\frac{1}{2} c^{+}\wedge [c, c]\\
    &=\mathrm{tr}\int_M\frac{1}{2} \mathrm{d}A \wedge \star \mathrm{d}A+\mathrm{d}A \wedge \star [A,A] +\frac{1}{2} [A,A]\wedge \star [A,A] \\ &\quad\quad\quad\quad- A^{+}\wedge D c+\frac{1}{2} c^{+}\wedge [c, c].
  \end{split}
\end{equation}

Further introducing antighost (anti)fields $\bar c^{(+)}$ and Nakanishi--Lautrup (anti)fields $b^{(+)}$, the antifields may be eliminated through the familiar gauge-fixing fermion to yield a Becchi--Rouet--Stora--Tyutin (BRST) action from which the complete set of Feynman rules may be derived, ready to be used in the computation of scattering amplitudes.

What does this have to do with $L_\infty$-algebras? In a nutshell, the set of (anti)fields\footnote{After a degree shift.} forms a graded vector space $V=\bigoplus_n V^n$ that carries a cyclic $L_\infty$-algebra structure. In the current example, the BV (anti)fields of Yang--Mills theory in \eqref{YMBV} belong to
\begin{equation}
  \begin{split}
    V_{\text{YM}} &= V_{\text{YM}}^0\oplus V_{\text{YM}}^1\oplus V_{\text{YM}}^2\oplus V_{\text{YM}}^3\\
    &=\Omega^0(M, \mathfrak{g})\oplus\Omega^1(M, \mathfrak{g})\oplus\Omega^{d-1}(M, \mathfrak{g})\oplus\Omega^d(M, \mathfrak{g}),
  \end{split}
\end{equation}where $\Omega^p(M, \mathfrak{g})$ is the space of $\mathfrak{g}$-valued $p$-forms on $M$. The $L_\infty$-algebra and cyclic  structures on $V_{\text{YM}}$ are determined by the corresponding equations of motion and BV action, respectively. In fact, every perturbative Lagrangian quantum field theory corresponds precisely to a cyclic $L_\infty$-algebra in this manner. Thus, let us briefly summarise the relevant basics of $L_\infty$-algebras so that we may draw out this correspondence. An expository account is given in \cite{Jurco:2018sby}.

Recall that an ordinary Lie algebra consists of a vector space $V$ equipped with an antisymmetric bilinear product, the Lie bracket $[x_1, x_2]\eqqcolon \mu_2(x_1, x_2)$, where the latter notation will become clear momentarily. An $L_\infty$-algebra generalises these data to a $\mathbb{Z}$-graded vector space $V=\bigoplus_n V^n$, where $n$ is the degree of $V^n$, equipped with graded-antisymmetric $k$-linear products $\mu_k(x_1, x_2,\dotsc,x_k)$, which one ought to think of as \emph{higher} Lie brackets\footnote{This generalisation to homotopy Lie algebras is a cornerstone of higher gauge theory and generalised symmetries in quantum field theory. For reviews, see \cite{Borsten:2024gox} and references therein.}. While $[-,-]$ obeys the Jacobi identity on the nose, the $\mu_k(-, -,\dotsc, -)$ obey \emph{higher} Jacobi identities merely up to \emph{homotopies} given by the higher brackets themselves. For example, assuming for simplicity $x_i$ all of degree \(0\), we have
\begin{equation}
  \mu_2(x_1, \mu_2(x_2, x_3)) + \mu_2(x_3, \mu_2(x_1, x_2)) +\mu_2(x_2, \mu_2(x_3, x_1)) = \mu_1(\mu_3(x_1, x_2, x_3)) +\cdots.
\end{equation}Thus, the standard Jacobi identity for $\mu_2$ may fail up to a homotopy given by the ternary bracket $\mu_3$; when $\mu_3$ (or $\mu_1$) is identically zero, the Jacobi identity for $\mu_2$ holds exactly.

In particular, the higher Jacobi identities imply that $\mu_1$ is a differential, i.e.~$\mu_1(\mu_1(x))=0$, and a graded derivation with respect to $\mu_2$, i.e.\footnote{When no confusion may arise, an $x$ appearing in exponent of $(-1)$ will denote the degree of a homogeneous element $x\in \bigoplus_n V^n$.}
\begin{equation}
  \mu_1(\mu_2(x,y))=\mu_2(\mu_1(x),y)+(-1)^{x}\mu_2(x,\mu_1(y)).
\end{equation}

A \emph{cyclic} $L_\infty$-algebra further carries a \emph{cyclic structure}, an inner product $\langle -,-\rangle$ of degree \(-3\) on $V$, which obeys a graded cyclicity condition with respect to the $\mu_k$, generalising the invariance and cyclicity of the Cartan--Killing form on a conventional Lie algebra. Given a cyclic structure, one can define the \emph{homotopy} Maurer--Cartan (hMC) action\footnote{Technically, either one should insert some signs into \eqref{eq:hMC_action} coming from décalage (see e.g.~\cite{Mehta:2012ppa}), or one should eliminate the signs via the superfield trick \cite[(2.48)]{Jurco:2018sby}. For simplicity we omit this sign.}
\begin{equation}\label{eq:hMC_action}
  S_{\text{hMC}}[x] = \sum_k \frac{1}{(k+1)!} \langle x, \mu_k(x,x,\ldots x)\rangle.
\end{equation}

Now, when the cyclic $L_\infty$-algebra corresponds to that of a BV theory with $\Phi = c+\varphi+\varphi^++c^+$, we have
\begin{equation}
  S_{\text{hMC}}[\Phi] = S_{\text{BV}}[\varphi, c, \varphi^+, c^+],
\end{equation}
making the $L_\infty$/BV correspondence manifest.

In particular, the differential $\mu_1$ provides the kinetic terms
\begin{equation}
  \langle \Phi,\mu_1(\Phi)\rangle = \sum\int\varphi D\varphi+\dotsb
\end{equation}
for $D$ the relevant differential operator for the fields $\varphi$. For example, in the case of Yang--Mills theory, $\mu_1(A)= \mathrm{d} \star \mathrm{d} A$ and $\langle A, A^+\rangle = \mathrm{tr} \int A \wedge A^+$ so that\footnote{In passing from the homotopy Maurer--Cartan form $\langle A, \mu_1(A)\rangle = \mathrm{tr} \int A \wedge \mathrm{d} \star \mathrm{d} A$ to the familiar Yang--Mills kinetic term $\mathrm{tr} \int  \mathrm{d} A \wedge \star \mathrm{d} A$, we have integrated by parts assuming $M$ has no boundary. This story is modified in the presence of boundaries, as discussed in \cite{cricket}. See also \cite{Cattaneo_2014, Cattaneo:2015vsa, Cattaneo:2019jpn, Mnev_2019, Chiaffrino:2023wxk} for important prior  work, from a slightly different perspective, on the BV formalism on manifolds with boundary.}
\begin{equation}
  \langle A, \mu_1(A)\rangle = \mathrm{tr} \int A \wedge \mathrm{d} \star \mathrm{d} A = \mathrm{tr} \int\mathrm{d} A \wedge \star \mathrm{d} A ,
\end{equation}
the usual linearised Yang--Mills kinetic term.
For the complete details, see \cite{Jurco:2018sby}. The $(k+1)$-point interactions are then given by the higher brackets $\mu_k$. For example
\begin{equation}
  \langle A, \mu_2(A, A)\rangle = \mathrm{tr} \int A \wedge \mathrm{d} \star [A,A],
\end{equation}
where
\begin{equation}
  \mu_2\left(A_1, A_2\right)\coloneqq\mathrm{d} \star\left[A_1, A_2\right]+\left[A_1, \star \mathrm{d} A_2\right]+\left[A_2, \star \mathrm{d} A_1\right].
\end{equation}
Said another way, the higher $\mu_k$ determine the Feynman vertices.

It is a nontrivial, but crucial, fact that the $\mu_k$ corresponding to the BV action do indeed form an $L_\infty$-algebra. With this insight in hand, many features of quantum field theory are translated into the mathematics of homotopy algebras and vice versa.

\subsection[{\texorpdfstring{\(L_\infty\)}{𝐿∞}-algebras and scattering amplitudes}]{$\boldsymbol{L_\infty}$-algebras and scattering amplitudes}\label{algebras_and_scattering_amplitudes}
Couched in these terms, (the generating functional of) the $S$-matrix is naturally and inevitably framed in terms of the \emph{minimal model theorem} of homotopy Lie algebras \cite{Kajiura:2001ng,Kajiura:2003ax,Nutzi:2018vkl,Macrelli:2019afx,Arvanitakis:2019ald,Jurco:2019yfd}.

Since $\mu_1$ is a differential, we can take its cohomology\footnote{The equivalence classes of $V$ that are $\mu_1$-closed, $\mu_1(x)=0$, modulo $\mu_1$-exact elements, $x=\mu_1(y)$.}
$\operatorname H^\bullet(V)$.
Recall that the kinetic term of the associated BV action is given by $\mu_1$. Thus $\mu_1(\varphi)=0$ is nothing but the linearised equation of motion for $\varphi$, and the cohomology is the space of physical on-shell asymptotic states (when the \(S\)-matrix is well-defined).\footnote{There are some technical qualifications to this relating to on-shell ghosts or antifields, confinement, etc.; for a related discussion see \cref{sec:stabilityaxiom}.}
The minimal model theorem states that every $L_\infty$-algebra $\mathfrak{L}$, with graded vector space $V$ and $k$-linear products $\mu_k$, is \emph{quasi-isomorphic}\footnote{A quasi-isomorphism is a morphism (a map between $L_\infty$-algebras respecting their higher algebraic structure, see \cref{conventions} and e.g.~\cite{Jurco:2018sby} for more details) that induce an isomorphism between the cohomologies of the domain and the codomain.
That is, it preserves the space of asymptotic on-shell states but allows for field redefinitions of off-shell fields.} to a  minimal $L_\infty$-algebra, $\mathfrak{L}'$, with graded vector space $V'\cong\operatorname H^\bullet(V)$ and $k$-linear products $\mu'_k$ such that $\mu_1'(x')=0$ for all $x'\in V'$.
The minimal model of an \(L_\infty\)-algebra is unique up to morphisms called \(L_\infty\)-isomorphisms.\footnote{See \cref{conventions} and e.g.~\cite{Jurco:2018sby} for details.}
Furthermore, the products \(\mu'_k\) of the minimal model can be constructed (via  \emph{homotopy transfer}) from the $\mu_k$ of the original $L_\infty$-algebra. This procedure generates a sum over diagrams reminiscent of the familiar Feynman diagrams, suggesting that the higher products \(\mu'_k\), or rather their contractions,
\begin{equation}\label{eq:amplitude}
  \langle\varphi_0,\mu'_k(\varphi_1,\dotsc,\varphi_k)\rangle,
\end{equation}
with the cyclic structure, should be interpreted as the \((n=k+1)\)-point components of the tree-level \(S\)-matrix.

This putative interpretation meets the immediate objection that the specific numbers \eqref{eq:amplitude} are not uniquely defined. 
Indeed, a strict isomorphism class of putative $S$-matrices  is given by the \emph{choice} of what is known as the deformation-retract datum \((i,p,h)\) used to produce the minimal model, as briefly reviewed in \cref{Smatrixdeformationretract}. Thus, extra conditions  are required to fix the minimal model encoding the physical $S$-matrix elements \eqref{eq:amplitude} \cite{Kajiura:2003ax, Macrelli:2019afx}. In physically relevant examples, when \(h\) is chosen to agree with a conventional propagator, this then fixes a canonical  deformation-retract datum \((i,p,h)\). These data determine a canonical minimal model, which we shall denote $\fL^\cm=(V^\cm, \mu_k^\cm)$, encoding the physical $S$-matrix \cite{Macrelli:2019afx}. The $L_\infty$-algebra structure on the canonical minimal model $\fL^\cm$ translates directly into the familiar Feynman-diagrammatic construction of the tree-level $S$-matrix \cite{Macrelli:2019afx, Saemann:2020oyz}.

For $\varphi_i^\cm \in V^\cm \cong\operatorname H^\bullet(V)$ (which, recall, is the space of physical on-shell states of the BV theory, e.g.~transverse on-shell gluons in Yang--Mills theory), the $(k+1)$-point tree-level amplitudes $A_{k+1}^{\text{tree-level}}(\varphi_i^\cm)$ with external states specified by $\varphi_i^\cm$ are given by
\begin{equation}
  A_{k+1}^{\text{tree-level}}(\varphi_i^\cm)= \langle \varphi_0^\cm, \mu_k^\cm(\varphi_1^\cm,\varphi_2^\cm,\dotsc,\varphi_{k}^\cm)\rangle.
\end{equation}
More generally, the generating functional of the connected tree-level \(S\)-matrix (excluding the two-point function) is the homotopy Maurer--Cartan action of the minimal model
\begin{equation}
  S^\cm _{\text{hMC}}[\varphi^\cm ] = \sum_k \frac{1}{(k+1)!}\langle \varphi^\cm , \mu_k^\cm (\varphi^\cm ,\varphi^\cm ,\ldots, \varphi^\cm )\rangle.
\end{equation}
The generalisation to loops employs  quantum \(L_\infty\)-algebra \(\fL =( V,\mu_k, \Delta)\) \cite{Zwiebach:1992ie,Markl:1997bj,Pulmann:2016aa,Doubek:2017naz, Jurco:2019yfd,Saemann:2020oyz, jurčo2024lagrangian}, where $\Delta$ is the BV Laplacian. In physically relevant examples, the (formal) loop-level $S$-matrix again follows from homotopy transfer perturbed by $\hbar \Delta$, as in   \cite{Pulmann:2016aa,Doubek:2017naz, Jurco:2019yfd,Saemann:2020oyz}.

To summarise,  actions and amplitudes are unified through quasi-isomorphic cyclic (quantum) $L_\infty$-algebras.
This perspective yields a correspondence between homotopy algebras and perturbative quantum field theories, where various properties of the latter are manifested in (higher) algebraic structure of the former. See \cite{Nutzi:2018vkl, Nutzi:2019ufl, Reiterer:2019dys, Jurco:2018sby, Jurco:2019bvp, Jurco:2019yfd, Macrelli:2019afx, Jurco:2020yyu, Borsten:2020zgj, Saemann:2020oyz, Borsten:2021hua, Borsten:2021gyl, Borsten:2022ouu, Bonezzi:2022yuh, Borsten:2022vtg, Szabo:2022edp, Bonezzi:2022bse, Escudero:2022zdz,  Borsten:2023ned, Borsten:2023reb, Szabo:2023cmv, Bonezzi:2023pox, Bonezzi:2023ced, Bonezzi:2023ciu, Armstrong-Williams:2024icu} for examples ranging from scattering amplitudes recursion relations to colour--kinematics duality.\footnote{There are also close connections to the factorisation algebra programme of Costello and Gwilliam~\cite{Costello:2011aa, Costello:2016vjw,Costello:2021jvx}.}

\subsection{Homotopy algebras and generalised asymptotic observables}\label{algebras_generalised_observables}
The above correspondence raises the obvious question \cite{Macrelli:2019afx}: what to make of those minimal models obtained by using other, noncanonical deformation-retract data \((i,p,h)\)? We provide an interpretation to this question: for appropriate choices of \((i,p,h)\), the resulting minimal models are the Schwinger--Keldysh amplitudes, as derived via LSZ reduction in \cite{Caron-Huot:2023vxl}, which underlie the generalised asymptotic observables. Moreover, an alternative (but equivalent) homotopy transfer yields the generalised cut amplitudes in the perturbative blobology formalism of \cite{Caron-Huot:2023vxl}. Thus, the equivalence between the Schwinger--Keldysh and blobological constructions of generalised cut amplitudes in \cite{Caron-Huot:2023vxl} is translated into the language of quasi-isomorphism between minimal models. 

This has immediate ramifications. It is already surprising that Schwinger--Keldysh amplitudes naturally package themselves into the structure of an \(L_\infty\)-algebra just like scattering amplitudes and that they naturally appear in the homotopy algebra--QFT correspondence. But what's more, since minimal models are unique up to (non-strict) \(L_\infty\)-isomorphisms, it turns out that the \(L_\infty\)-algebra of Schwinger--Keldysh amplitudes is quasi-isomorphic to (an \(N\)-fold direct sum of, where \(N\) is the number of time folds) that encoding the ordinary amplitudes! In particular, this quasi-isomorphism entails recursive formulae that compute Schwinger--Keldysh amplitudes in terms of ordinary amplitudes or vice versa. In one direction, the existence of such a recursive formula is perhaps not surprising --- after all, Schwinger--Keldysh amplitudes are obtained by gluing together ordinary scattering amplitudes along time folds. On the other hand, there also exists a recursive formula in the other direction. That is, Schwinger--Keldysh amplitudes and ordinary scattering amplitudes contain precisely the same physical information in a certain precise sense.

Fitting the Schwinger--Keldysh amplitudes into the minimal model framework summarised in \cref{algebras_and_scattering_amplitudes} presents some obvious difficulties: the cut propagator \(2\pi\mathrm i\delta(p^2-m^2)\Theta(p^0)\), involving such singular objects as the Dirac delta, is not the inverse of a differential operator. Instead, we leverage the freedom in homotopy transfer to choose deformation-retract data that do not necessarily come from the usual propagator; this freedom allows only certain kinds of exotic propagators, but the class of allowed propagators includes the cut propagator appearing in the Schwinger--Keldysh formalism.

New formalisms suggest new perspectives and avenues for generalisation. In the present case, generalising to arbitrary deformation-retract data \((i,p,h)\) delineates a natural class of generalisations of Schwinger--Keldysh amplitudes, which we call \emph{generalised cut amplitudes}, that enjoy the same properties as Schwinger--Keldysh amplitudes, namely quasi-isomorphism to (and thus recursive expressibility in terms of) ordinary amplitudes. (For physically relevant examples, specific generalised cuts coincide with common examples in the scattering amplitudes literature \cite{britto2024cuttingedge}.) Given this observation, it may also be interesting to explore  possible connections between homotopy algebras and the crossing between (generalised) amplitudes developed in \cite{Caron-Huot:2023ikn}. Furthermore, the homotopy-algebraic formalism automatically extends to other types of homotopy algebras. For example, to obtain colour-stripped planar amplitudes in theories with colour or flavour symmetry, one can easily swap out \(L_\infty\)-algebras for \(A_\infty\)-algebras with exactly the same formal properties.

\subsection{Notation and conventions}\label{conventions}
\paragraph{Differential graded vector spaces}
We work over the real numbers \(\mathbb R\). Graded vector spaces are $\mathbb{Z}$-graded: \(V = \bigoplus_{n\in\mathbb Z} V^n\), where each $V^n$ is a vector space of degree $n$. We denote the graded-anticommutative (or graded-antisymmetric) tensor product by \(V \wedge V\), where
\begin{equation}
  x\wedge y = x\otimes y-(-1)^{xy}y\otimes x.
\end{equation}
A differential graded (dg) vector space, or a cochain complex, is a graded vector space $V$ together with a square zero endomorphism $d\in \text{End}^1(V)$ of degree 1. The condition $d^2=0$ allows one to define cohomology
\begin{equation}
  \label{eq:cohomology}
  \operatorname H_d^\bullet(V):=\frac{\text{ker}(d)}{\text{im}(d)}.
\end{equation}
A cochain map, or a map  of dg vector spaces $f:(V,d_V)\to (W,d_W)$ is a linear map such that $d_W\circ f=f\circ d_V$.

Let $[n]$ be the one-dimensional graded vector space concentrated in degree $-n$. For any vector space $V$ we define $V[n]\coloneqq[n]\otimes V$. There is then a natural identification $(V[n])^k\cong V^{n+k}$. We define the suspension map $s\colon V\to V[1]$ by tensoring from the left.

Throughout we work with the Koszul sign convention, where moving symbols past each other results in a sign flip if the symbols both carry odd degrees.

\paragraph{$\boldsymbol{L_\infty}$-algebras}
An $L_\infty$-algebra $\mathfrak{L}$ consists of a graded vector space $V$ and a set of multilinear totally graded-antisymmetric $k$-ary maps
\begin{equation}\mu_k\colon V^{\wedge k}\to V\qquad(k\geq1).\end{equation}
We use cohomological conventions for the theory of $L_\infty$-algebras $\mathfrak{L}=(V,\mu_k)$, such that \(\mu_k\) carries degree $2-k$. Explicitly, a totally graded-antisymmetric bilinear product $\mu_2\colon V\wedge V\to V$ satisfies
\begin{equation}
    \mu_2 (x_2, x_1) = -(-1)^{x_1 x_2}\mu_2 (x_1, x_2),
\end{equation}
with the evident generalisation to $\mu_k $ for arbitrary $k$.

The maps $\mu_k$ must obey the higher, or homotopy, Jacobi identities, explicitly given by
\begin{equation}\label{hjr}
  \sum_{\substack{j+k=i\\\sigma \in \operatorname{Sh}(j ; i)}} \chi\left(\sigma ;  x_1, \ldots,  x_i\right)(-1)^k \mu_{k+1}\left(\mu_j\left( x_{\sigma(1)}, \ldots,  x_{\sigma(j)}\right),  x_{\sigma(j+1)}, \ldots,  x_{\sigma(i)}\right)=0
\end{equation}
for all $i \in \mathbb{N}$ and $ x_1, \ldots,  x_i \in V$, where $\chi\left(\sigma ;  x_1, \ldots,  x_i\right)$ is the graded-antisymmetric Koszul sign\footnote{Related to the Koszul sign defined for the symmetric monoidal category of graded vector spaces by an overall $\operatorname{sgn}(\sigma)$.} defined by
\begin{equation}\label{eq:koszul-sign}
  x_1 \wedge \ldots \wedge  x_i=\chi\left(\sigma;x_1,\ldots,x_i\right)x_{\sigma(1)} \wedge \ldots \wedge  x_{\sigma(i)}.
\end{equation}
The sum is over all shuffles $\sigma \in \operatorname{Sh}(j ; i)$, where the notation \(\operatorname{Sh}(k_1,\dotsc,k_{j-1};i)\) denotes the set of \((k_1,\dotsc,k_{j-1};i)\)-shuffles, i.e.~permutations $\sigma\colon\{1,\dotsc,i\}\to\{1,\dotsc,i\}$ such that $\sigma(1)<\cdots<\sigma(k_1)$, and $\sigma(k_1+1)<\cdots<\sigma(k_1+k_2)$, and so on until $\sigma(k_1+\dotsb+k_{j-1}+1)<\cdots<\sigma(i)$.

In particular, \eqref{hjr} implies $\mu_1\circ \mu_1=0$ so that $(V, \mu_1)$ is a  dg vector space, or a cochain complex, with cohomology $\operatorname H_{\mu_1}^\bullet(V)$, which we will write as $\operatorname H^\bullet(V)$ when no confusion may arise. 

\paragraph{Cyclic $\boldsymbol{L_{\infty}}$-algebras}

A cyclic structure \cite{Kontsevich:1992aa, Penkava:9512014} on an $L_{\infty}$-algebra $\mathfrak{L}=(V, \mu_k)$ is a graded-symmetric nondegenerate bilinear form
\begin{equation}
  \langle-,-\rangle\colon V \times V \rightarrow \mathbb{R}
\end{equation}
which obeys the natural graded cyclicity condition for homogeneous $ x_1, \ldots, x_{i+1} \in V$:\footnote{The sign, apart from the overall \((-1)^i\) (the parity of the permutation), may be formally thought of as the Koszul sign arising from moving the \(x_{i+1}\) past `\(x_1\mu_i(x_2,\dotsc,x_i,-)\)' and then moving \(x_1\) past `\(\mu_i\)'.}
\begin{equation}
  \left\langle x_1, \mu_i\left( x_2, \ldots,  x_{i+1}\right)\right\rangle=(-1)^{i+(2-i)|x_1|+ x_{i+1}(2-i+\sum_{j=1}^i|x_j|)} \left\langle x_{i+1}, \mu_i\left( x_1, \ldots,  x_i\right)\right\rangle.
\end{equation}
A \emph{cyclic} $L_\infty$-algebra $\fL=(V,\mu_k,\langle-,-\rangle)$ is an $L_\infty$-algebra $(V,\mu_k)$ with a cyclic structure $\langle-,-\rangle$.
\paragraph{Quasi-isomorphisms}
A morphism $\phi\colon\fL\to\fL'$ of $L_\infty$-algebras is a family of homogeneous maps $\phi_i\colon V \times \cdots \times V \rightarrow V^{\prime}$ of degree $1-i$ for $i \in \mathbb{N}$ which are multilinear, totally graded-antisymmetric, and obey
\begin{multline}
  \label{eq:coherencemorphisms}
      \sum_{\substack{j+k=i\\\sigma \in \operatorname{Sh}(j ; i)}}(-1)^k \chi\mleft(\sigma ;  x_1, \ldots,  x_i\mright) \phi_{k+1}\mleft(\mu_j\mleft( x_{\sigma(1)}, \ldots,  x_{\sigma(j)}\mright),  x_{\sigma(j+1)}, \ldots,  x_{\sigma(i)}\mright) \\
      =\sum_{j=1}^i \frac{1}{j !} \sum_{\substack{k_1+\cdots+k_j=i\\\sigma \in \operatorname{Sh}\mleft(k_1, \ldots, k_{j-1} ; i\mright)}} \chi\mleft(\sigma ;  x_1, \ldots,  x_i\mright) \zeta\mleft(\sigma ;  x_1, \ldots,  x_i\mright)\hskip5em \\
      \times \mu_j^{\prime}\mleft(\phi_{k_1}\mleft( x_{\sigma(1)}, \ldots,  x_{\sigma\mleft(k_1\mright)}\mright), \ldots, \phi_{k_j}\mleft( x_{\sigma\mleft(k_1+\cdots+k_{j-1}+1\mright)}, \ldots,  x_{\sigma(i)}\mright)\mright),
\end{multline}
where \(\operatorname{Sh}\mleft(k_1, \ldots, k_{j-1} ; i\mright)\) denotes the set of \((k_1,\dotsc,k_{j-1};i)\)-shuffles (defined below \eqref{eq:koszul-sign}) and
where, in addition to the Koszul sign \(\chi\) \eqref{eq:koszul-sign}, we also have the additional sign factor\footnote{This sign factor arises from the general theory of homotopy algebras \cite{Loday:2012aa,Jurco:2018sby}; in particular, with these signs, a morphism of \(L_\infty\)-algebras is equivalent (at least in the finite-dimensional case) to a morphism of free commutative algebras equipped with differentials.}
\begin{equation}
    \zeta\mleft(\sigma ;  x_1, \ldots,  x_i\mright)\coloneqq(-1)^{\sum_{1 \leqslant m<n \leqslant j} k_m k_n+\sum_{m=1}^{j-1} k_m(j-m)+\sum_{m=2}^j\left(1-k_m\right) \sum_{k=1}^{k_1+\cdots+k_{m-1}}x_{\sigma(k)}} .
  \end{equation}
  A morphism of $L_\infty$-algebras is an \emph{$L_\infty$-isomorphism} if the first component map $\phi_1$ is invertible. The coherence relations \eqref{eq:coherencemorphisms} implies in particular that $\phi_1$ is a chain map ($\phi_1\circ \mu_1 = \mu_1'\circ \phi_1$) and thus induces a  map on the cohomology, $\tilde\phi_1\colon\operatorname H^\bullet(V) \to\operatorname H^\bullet(V')$. 
A \emph{quasi-isomorphism} is a morphism such that $\tilde\phi_1$ is an isomorphism. 
A (quasi-iso)morphism $\phi\colon\fL\to\fL'$ between cyclic $L_{\infty}$-algebras is called \emph{cyclic} if it preserves the cyclic structure:
\begin{subequations}\label{cyclicqi}
\begin{equation}
  \left\langle\phi_1\mleft( x_1\mright), \phi_1\mleft( x_2\mright)\right\rangle'=\left\langle x_1,  x_2\right\rangle
\end{equation}
\begin{equation}
  \sum_{\substack{j+k=i \\ j, k \geqslant 1}}\langle\phi_j\mleft( x_1, \ldots,  x_j\mright), \phi_k\mleft( x_{j+1}, \ldots,  x_{j+k}\mright)\rangle'=0 \qquad \forall i \geqslant 3.
\end{equation}
\end{subequations}

\paragraph{Strict morphisms}
\label{sec:strict}
There is a subclass of morphisms of $L_\infty$-algebras called \emph{strict} morphisms. If $\phi\colon\fL\to\fL'$ is a morphism of $L_\infty$-algebras, with component maps $\phi_i\colon\fL^{\wedge i}\to\fL'$, the morphism $\phi$ is called strict if $\phi_i=0$ for $i>1$. Strict quasi-isomorphisms and strict isomorphisms of $L_\infty$-algebras are defined in the obvious way.
\section{Homotopy algebras and the \texorpdfstring{\(\boldsymbol S\)}{𝑆}-matrix}

As briefly reviewed in \cref{algebras_and_scattering_amplitudes}, the tree-level  $S$-matrix of a quantum field theory described by the cyclic $L_\infty$-algebra $\fL=(V,\mu_k)$ is generated by the homotopy Maurer--Cartan action of the minimal model $\fL^\cm$ using an appropriate deformation-retract datum \((i,p,h)\) \cite{Macrelli:2019afx}. This connection to the $S$-matrix is made explicit via \emph{homotopy transfer}, which expresses the products $\mu^\cm_k$ of the minimal model $\fL^\cm$ through the Feynman diagrams constructed from the vertices $\mu_k$ of $\fL$.

\subsection{The minimal model theorem}
\label{Smatrixdeformationretract}
First, we have the deformation retract. Restricting to the free theory (i.e.~forgetting about all higher products $\mu_{k}$ for $k>1$), $(V, \mu_1)$ and $(\operatorname H^\bullet(V), \mu'_1\coloneqq 0)$ form cochain complexes
since $\mu_1\circ\mu_1 =0$ by the homotopy Jacobi relations \eqref{hjr}. Then there exists a pair of cochain maps,
$p\colon V\to\operatorname H^\bullet(V)$ and $i\colon\operatorname H^\bullet(V)\to V$, both of degree \(0\) and such that $p\circ i=\id_{\operatorname H^\bullet(V)}$.  Physically, $p$ projects off-shell field configurations onto the corresponding on-shell states, while $i$ embeds the on-shell states in the space of off-shell fields. Thus, the condition $p\circ i=\id_{\operatorname H^\bullet(V)}$ simply states that if one embeds an on-shell state in the space of off-shell fields and then projects back to the space of on-shell states, the original on-shell state is recovered. On the other hand, $i\circ p\not=\id_{V}$ generically, since many off-shell states correspond to the same on-shell state.

Although $i\circ p\not=\id_{V}$, there always exists a cochain homotopy $h$ relating $i\circ p$ to $\id_V$, i.e.~a degree-$(-1)$ map $h\colon V\to V$, such that
\begin{equation}
  \id_V-i\circ p =\mu_1h+h\mu_1.
\end{equation}
A triple $(i,p,h)$ satisfying this relation is a known as (the datum of) a deformation retract, see e.g.~\cite{Crainic:0403266}. They are usually depicted diagrammatically as
\begin{equation}
  \begin{tikzcd}[ampersand replacement=\&]
    \label{eq:ogretract}
    \ar[loop left,out=150,in=210,distance=30,"h" left,xshift=0pt] (V, \mu_1)\arrow[r,twoheadrightarrow,shift left]{}{p} \& (\operatorname H^\bullet(V), 0)\arrow[l,hookrightarrow,shift left]{}{i}
  \end{tikzcd}.
\end{equation}
It is clear, then, that the homotopy $h$ inverts the differential $\mu_1$, which gives the free equations of motion, up to the on-shell states. In particular, the familiar propagators of scattering theory provide  examples of \(h\). Note that the choice of gauge, when required, is implicit in the choice of $i$. 

Now, using homotopy transfer  one can include the interactions $\mu_k$ on $V$  and transfer them along the retract \eqref{eq:ogretract}, lifting the chain map $i$ to a full quasi-isomorphism $\phi^{(i,p,h)}\colon\fL'\to \fL$.\footnote{We have labelled the quasi-isomorphism with the deformation retract triple $(i,p,h)$, since it depends on the choice of data.}
Explicitly, starting from $\phi^{(i,p,h)}_1(x')\coloneqq i x'$, the multilinear maps $\phi^{(i,p,h)}_k\colon (\operatorname H^\bullet(V))^{\wedge k}\to V$, defined recursively by
\begin{multline}
  \label{eq:injectionmap}
  \phi^{(i,p,h)}_n\mleft(x_1^{\prime}, \ldots, x_n^{\prime}\mright)\coloneqq-\sum_{j=2}^n \frac{1}{j !} \sum_{\substack{k_1+\cdots+k_j=n\\\sigma \in \operatorname{Sh}\mleft(k_1, \ldots, k_{j-1} ; n\mright)}}\chi\mleft(\sigma ; x_1^{\prime}, \ldots, x_n^{\prime}\mright) \zeta\mleft(\sigma ; x_1^{\prime}, \ldots, x_n^{\prime}\mright)\\
  \times h\mu_j\mleft(\phi^{(i,p,h)}_{k_1}\mleft(x_{\sigma(1)}^{\prime}, \ldots, x_{\sigma\left(k_1\right)}^{\prime}\mright), \ldots, \phi^{(i,p,h)}_{k_j}\mleft(x_{\sigma\mleft(k_1+\cdots+k_{j-1}+1\mright)}^{\prime}, \ldots, x_{\sigma(n)}^{\prime}\mright)\mright)
\end{multline}
yield a quasi-isomorphism if and only if the products of $\fL'$ are taken to be:
\begin{multline}\label{murecursive}
  \mu'_n(x'_1, \ldots,  x'_n)\coloneqq\sum_{j=2}^n \frac{1}{j !} \sum_{\substack{k_1+\cdots+k_j=n\\\sigma\in\operatorname{Sh}(k_1, \ldots, k_{j-1} ; n )}}\chi ( \sigma ;  x'_1, \ldots,  x'_n) \zeta ( \sigma ;  x'_1, \ldots,  x'_n) \\
  \times p \mu_j\mleft(\phi^{(i,p,h)}_{k_1}\mleft(x'_{\sigma(1)}, \ldots, x'_{\sigma(k_1)}\mright), \ldots, \phi^{(i,p,h)}_{k_j}\mleft(x'_{\sigma(k_1+\cdots+k_{j-1}+1)}, \ldots,  x'_{\sigma(n)}  \mright)\mright).
\end{multline}
This is known as \emph{homotopy transfer}, i.e.~we pertubatively transfer the higher products which encode the homotopy relaxation of the algebraic identities.
Put another way, the $\mu'_k$ of the minimal model are recursively constructed from the vertices $\mu_k$ and the deformation-retract datum \((i,p,h)\). The expression \eqref{murecursive} for $\mu_n'$ can be depicted pictorially as a sum over all rooted trees with $n$ leaves, where one labels the leaves with $i$, the root with $p$, $h$ on each internal line, and each vertex with appropriate $\mu_k$,

\begin{equation}
  \label{eq:homotopytransfer}
  \mu_n'=\sum_{\text{rooted trees}}\pm
  \begin{tikzpicture}
    [scale=0.5,baseline={([yshift=-1ex]current bounding box.center)}]
    \draw [thick] (0,0) -- (0,-1);
    \draw [thick] (0,0) -- (0.7,0.7);
     \draw [thick] (0,0) -- (-0.7,0.7);
    
    \draw [thick] (1.3,1.3) -- (2,2);

    \draw [thick] (-1.3,1.3) -- (-2,2);
    
    \draw [thick] (2,2) -- (3,3);
    \draw [thick] (2,2) -- (1,3);

    \draw [thick] (0,5.5) -- (0,4.5);
    \draw [thick] (0,5.5) -- (0.7,6.2);
    \draw [thick] (0,5.5) -- (-0.7,6.2);
    
    \node at (1,1) {\footnotesize$h$};
    \node at (-1,1) {\footnotesize$h$};
    \node at (0,4.2) {\footnotesize$h$}; 
    \node at (0,2.5) {$\cdots$};
    \node at (0,3.7) {$\vdots$};
   
    \node at (0,-1)[anchor=north] {\footnotesize\(p\)};
    \node at (-3,6)[anchor=south] {\footnotesize\(i\)};
    \node at (3,6) [anchor=south] {\footnotesize\(i\)};
    \node at (0,6) [anchor=south] {\footnotesize\(i\)};
    \node at (-1.5,6)[anchor=south] {\footnotesize\(\cdots\)};
    \node at (1.5,6)[anchor=south] {\footnotesize\(\cdots\)};

    \fill[gray!50] (0,5.5) circle (0.45cm);
    \draw [thick] (0,5.5) circle (0.45cm);
    \node at (0,5.5) {\footnotesize$\mu_j$};

    \fill[gray!50] (0,0) circle (0.45cm);
    \draw [thick] (0,0) circle (0.45cm);
    \node at (0,0) {\footnotesize$\mu_l$};

    \fill[gray!50] (2,2) circle (0.45cm);
    \draw [thick] (2,2) circle (0.45cm);
    \node at (2,2) {\footnotesize$\mu_k$};

  \end{tikzpicture}.
\end{equation}

For example, up to signs we have at four points
\begin{multline}\label{4points}
    \mu'_3(x'_1, x'_2, x'_3) =  p \mu_2( h \mu_2( i x'_1, i x'_2), i x'_3 ) +  p \mu_2( h \mu_2( i x'_3, i x'_1), i x'_2 )\\
    +  p \mu_2( h \mu_2( i x'_2, i x'_3), i x'_1 )+ p \mu_3( i x'_1, i x'_2, i x'_3 ).
\end{multline}

\subsection[The minimal model of the \texorpdfstring{\(S\)}{𝑆}-matrix]{The minimal model of the \(\boldsymbol S\)-matrix}

To connect the minimal model theorem to scattering amplitudes, let us now assume: 
\begin{enumerate}[(i)]
\item  $\fL$ is  equipped with a cyclic structure, denoted  \(\langle-,-\rangle\), and  the hMC action,
\begin{equation}
  S_{\text{hMC}}[\varphi] = \sum_k \frac{1}{(k+1)!} \langle \varphi, \mu_k(\varphi,\varphi,\dotsc,\varphi)\rangle,
\end{equation}
is the BV action of the corresponding quantum field theory.
\item   The homotopy \(h\) in the deformation-retract datum \((i,p,h)\) corresponds to a conventional   propagator\footnote{For physically relevant examples, this canonical choice yields a Hodge--Kodaira decomposition \cite{Macrelli:2019afx}, which is equivalent to having  \emph{special}  deformation-retract data, i.e.~the side conditions $h^2=ih=hp=0$ hold \cite{Chuang_2009}.} as described  in \cite{Macrelli:2019afx} for various physically relevant examples. 
\end{enumerate}

Then the corresponding canonical minimal model, $\fL^\cm$, has hMC action
\begin{equation}\label{mmhMC}
  S^\cm _{\text{hMC}}[\varphi^\cm ] = \sum_k \frac{1}{(k+1)!}\langle\varphi^\cm , \mu_k^\cm (\varphi^\cm ,\varphi^\cm ,\dotsc,\varphi^\cm )\rangle^\cm, 
\end{equation}
which is the generating functional of the $S$-matrix,
where
\begin{equation}
    \langle x_1,x_2\rangle^\cm\coloneqq \langle ix_1,ix_2\rangle
\end{equation}
is the induced cyclic structure on the minimal model.
The recursive definition of the products $\mu_k^\cm$, given in \eqref{murecursive}, then provides a Feynman-diagrammatic expansion of the amplitudes. For example, plugging \eqref{4points} into the quartic term, we have
\begin{multline}
    \langle \varphi^\cm _4,  \mu^\cm _3(\varphi^\cm _1, \varphi^\cm _2, \varphi^\cm _3)\rangle^\cm 
    = \langle \varphi^\cm _4,   p \mu_2( h \mu_2( i \varphi^\cm _1, i \varphi^\cm _2), i \varphi^\cm _3 )\rangle+\langle \varphi^\cm _4,  p \mu_2( h \mu_2( i \varphi^\cm _3, i \varphi^\cm _1), i \varphi^\cm _2 )\rangle\\
    +\langle \varphi^\cm _4,  p \mu_2( h \mu_2( i \varphi^\cm _2, i \varphi^\cm _3), i \varphi^\cm _1 ) \rangle+\langle \varphi^\cm _4, p \mu_3( i \varphi^\cm _1, i \varphi^\cm _2, i \varphi^\cm _3 )\rangle^\cm ,
\end{multline}
which we recognise (recalling that $h$ is the propagator on internal lines) as the familiar sum of the $s$-, $t$- and $u$-channels with the four-point contact term (where we have left the final pairing with the outstate $\phi^\cm_4$ implicit)
\begin{equation}
  \begin{tikzpicture}[scale=0.7,baseline={([yshift=-1ex]current bounding box.center)}]
    \draw [thick] (-2,0) -- (-0.6,-1.4);
    \draw [thick] (-0.35,-1.65) -- (0,-2);
    \draw [thick] (2,0) -- (0,-2);
    \draw [thick] (0,-2) -- (0,-3);
    \draw [thick] (-1,-1)--(0,0);
    \node at (-.5,-1.5) {\footnotesize{$h$}};
    \node at (0,-3.5) {\footnotesize\(p\)};
    \node at (0,0.5) {\footnotesize\(i\)};
    \node at (-2,0.5) {\footnotesize\(i\)};
    \node at (2,0.5) {\footnotesize\(i\)};
    
     \fill[gray!50] (-1,-1) circle (0.25cm);
    \draw [thick] (-1,-1) circle (0.25cm);
    \node at (-0.98,-1) {\footnotesize$\mu_2$};
    \fill[gray!50] (0,-2) circle (0.25cm);
    \draw [thick] (0,-2) circle (0.25cm);
    \node at (0.02,-2) {\footnotesize$\mu_2$};
  \end{tikzpicture}
  +
  \begin{tikzpicture}[scale=0.7,baseline={([yshift=-1ex]current bounding box.center)}]
    \draw [thick] (-2,0) -- (0,-2);
    \draw [thick] (2,0) -- (0.65,-1.35);
    \draw [thick] (0,-2) -- (0.4,-1.6);
    \draw [thick] (0,-2) -- (0,-3);
    \draw [thick] (1,-1)--(0,0);
    \node at (0.55,-1.45) {\footnotesize{$h$}};
     \fill[gray!50] (1,-1) circle (0.25cm);
    \draw [thick] (1,-1) circle (0.25cm);
    \node at (1.02,-1) {\footnotesize$\mu_2$};
    \fill[gray!50] (0,-2) circle (0.25cm);
    \draw [thick] (0,-2) circle (0.25cm);
    \node at (0.02,-2) {\footnotesize$\mu_2$};\node at (0,-3.5) {\footnotesize\(p\)};
    \node at (0,0.5) {\footnotesize\(i\)};
    \node at (-2,0.5) {\footnotesize\(i\)};
    \node at (2,0.5) {\footnotesize\(i\)};
  \end{tikzpicture}
  +
  \begin{tikzpicture}[scale=0.7,baseline={([yshift=-1ex]current bounding box.center)}]
    \draw [thick] (-2,0) -- (-0.6,-1.4);
    \draw [thick] (-0.35,-1.65) -- (0,-2);
    
    \draw [thick] (1,-1) -- (0,-2);
    \draw [thick] (0,-3) -- (0,-2);
    \draw [thick] (1,-1)--(0,0);
    \draw [thick] (-1,-1) -- (0.3,-0.57);
    \draw [thick] (2,0) -- (0.7,-.45);
    \node at (-.5,-1.5) {\footnotesize{$h$}};
  
    \node at (0,-3.5) {\footnotesize\(p\)};
    \node at (0,0.5) {\footnotesize\(i\)};
    \node at (-2,0.5) {\footnotesize\(i\)};
    \node at (2,0.5) {\footnotesize\(i\)};

     \fill[gray!50] (-1,-1) circle (0.25cm);
    \draw [thick] (-1,-1) circle (0.25cm);
    \node at (-0.98,-1) {\footnotesize$\mu_2$};
    \fill[gray!50] (0,-2) circle (0.25cm);
    \draw [thick] (0,-2) circle (0.25cm);
    \node at (0.02,-2) {\footnotesize$\mu_2$};
  \end{tikzpicture}
  +
  \begin{tikzpicture}[scale=0.7,baseline={([yshift=-1ex]current bounding box.center)}]
    \draw [thick] (-2,0) -- (0,-2);
    \draw [thick] (2,0) -- (0,-2);
    \draw [thick] (0,0) -- (0,-2);
    \draw [thick] (0,-2) -- (0,-3);
    \node at (0,-3.5) {\footnotesize\(p\)};
    \node at (0,0.5) {\footnotesize\(i\)};
    \node at (-2,0.5) {\footnotesize\(i\)};
    \node at (2,0.5) {\footnotesize\(i\)};
    \fill[gray!50] (0,-2) circle (0.25cm);
    \draw [thick] (0,-2) circle (0.25cm);
    \node at (0.02,-2) {\footnotesize$\mu_3$};
  \end{tikzpicture}
\end{equation}
of the four-point tree-level amplitude, i.e.
\begin{equation}
  \langle \varphi^\cm _4, \mu^\cm _3(\varphi^\cm _1, \varphi^\cm _2, \varphi^\cm _3)\rangle^\cm = A_{4}^{\text{tree-level}}(\varphi^\cm _1, \varphi^\cm _2, \varphi^\cm _3, \varphi^\cm _4).
\end{equation}
It is clear that this generalises to arbitrary tree-level amplitudes, see e.g.~\cite{Saemann:2020oyz} and references therein. For the quantum theory \cite{Jurco:2019yfd,Saemann:2020oyz}, loop-level scattering amplitudes appear as the corresponding minimal model for a loop or quantum \(L_\infty\)-algebra \cite{Pulmann:2016aa,Doubek:2017naz}. 

In this context, quasi-isomorphisms between $L_\infty$-algebras correspond to the  physical equivalence of the associated theories, e.g.~given by field redefinitions,  in the following sense. Consider a cyclic $L_\infty$-algebra  associated to a BV-action with a canonical minimal model associated to the corresponding tree-level $S$-matrix. A second, quasi-isomorphic, cyclic  $L_\infty$-algebra  will have a representative in its equivalence class of minimal models   that yields an identical $S$-matrix using an appropriate cyclic structure\footnote{This choice  may not coincide with that induced from the original  cyclic structure if the quasi-isomorphism is not cyclic \eqref{cyclicqi}.}.  In this sense it encodes the same physical content, but in general this representative need \emph{not} correspond identically to its tree-level $S$-matrix, as given by its canonical minimal model. 

This possibility follows from the fact that a minimal model $\fL'$ of a homotopy Lie algebra $\fL$ is merely unique up to $L_\infty$-isomorphism. In particular, the minimal model will depend in general on the choice of the deformation retract \((i,p,h)\). When the deformation-retract datum \((i,p,h)\) is given by the physical propagator, homotopy transfer produces the \(S\)-matrix. However, for a generic choice of deformation-retract datum $(i,p,h)$, the homotopy \(h\) need not be the propagator. In this case, the corresponding minimal model will merely be quasi-isomorphic to that of the $S$-matrix, not identical to it.

\section{Homotopy algebras and generalised cut amplitudes}
Starting from an $L_\infty$-algebra with homotopy Maurer--Cartan action $S_{\text{hMC}}[\varphi]$ associated to a BV action $S_{\text{BV}}[a]$, the $S$-matrix is then generated by the homotopy Maurer--Cartan action $S^\cm_{\text{hMC}}[\varphi^\cm]$ of the unique (up to strict isomorphism) minimal model given by the canonical cyclic special deformation retract, i.e.~one that agrees with the propagator. Here we introduce generalised cut amplitudes through a deformation of the canonical deformation-retract datum. In particular, by deforming the homotopy (i.e.\ propagator) $h\mapsto h+b$, we obtain a quasi-isomorphic minimal model whose hMC action is the generating functional of the generalised cut amplitudes.

\subsection{Generalised cut amplitudes}
Suppose $\fL=(V, \mu_k, \langle-,-\rangle)$ is the cyclic $L_\infty$-algebra, corresponding to some BV-theory, and that \((i,p,h)\) is the canonical deformation-retract datum between $\fL$ and the minimal model $\fL^\cm=( H^\bullet( V), \mu^\cm_k{}^{(i,p,h)}, \langle-,-\rangle^\cm)$. We have labelled the products $\mu^\cm_k{}^{(i,p,h)}$ to emphasise that they are generated via homotopy transfer by the choice of the deformation-retract datum \((i,p,h)\). When no confusion may arise we simplify the notation to $\mu_k^\cm$. 
\begin{definition}[Generalised cut propagator] Let \((i,p,h)\) be a deformation-retract datum of cochain complexes
  \begin{equation}
    \begin{tikzcd}
      \ar[loop left, "h"] (V, \mu_1)\rar[twoheadrightarrow,shift left,"p"] & (\operatorname H^\bullet(V), 0)\lar[hookrightarrow,shift left,"i"]
    \end{tikzcd}.
  \end{equation}
  A \emph{generalised cut propagator} is a degree $(-1)$ map \(b\colon V\to V\) of the form
  \begin{equation}
    \label{eq:SKcut}
    b\coloneqq i\beta p,
  \end{equation}
  where $\beta\colon\operatorname H^\bullet(V)\to\operatorname H^\bullet(V)$ is an arbitrary linear map of degree minus one.
\end{definition}

Note that the cochain map conditions imply \(\mu_1i=0=p\mu_1\)  so that
\begin{equation}
  \mu_1 b+b\mu_1 =\mu_1 i\beta p+i\beta p\mu_1 =0.
\end{equation}
Thus, \((i,p,h+b)\) also forms a deformation-retract datum, but generically \(h+b\) will not be the ordinary propagator. In particular, generically the side conditions will fail. 

In terms of homotopy transfer and scattering amplitudes, \(b\) represents a propagator that is `cut' in the sense that it involves a projection to the cohomology $\operatorname H^\bullet( V)$. In the context of scattering amplitudes, this corresponds to the usual notion of a cut propagator \cite[\S V.A]{britto2024cuttingedge}, since the cohomology is precisely the space of physical asymptotic on-shell states of the theory. In particular, the cut propagators of Schwinger--Keldysh Feynman diagrams are a special case thereof.

\begin{definition}[\(L_\infty\)-algebra of generalised cut amplitudes]
  Let  $\fL=(V, \mu_k, \langle-,-\rangle)$ be a cyclic $L_\infty$-algebra with the canonical deformation retract $(i,p,h)$ to its minimal model $\fL^\cm=(\operatorname H^\bullet(V), \mu^\cm_k, \langle-,-\rangle^\cm)$. Given a generalised cut propagator $b$,
  the \emph{\(L_\infty\)-algebra of generalised cut amplitudes} is the minimal model \(\tilde{\fL}=(\operatorname H^\bullet( V), \tilde{\mu}_k, \langle-,-\rangle\tilde{}~)\) given by the deformation-retract datum \((i,p,h+b)\).
  The \emph{generalised cut amplitudes} associated to the generalised cut propagator \(b\) are the numbers
  \begin{equation}
    \langle \tilde\varphi_0,\tilde\mu_n(\tilde\varphi_1,\dotsc,\tilde\varphi_n)\rangle\tilde{}
  \end{equation}
  given by the \(L_\infty\)-algebra of generalised cut amplitudes.
\end{definition}

Now, given the construction, the following theorem is immediate.
\begin{theorem}\label{thm:keldysh-schwinger-is-quasi-iso}
  The \(L_\infty\)-algebra of generalised cut amplitudes $\tilde\fL$ is quasi-isomorphic to the \(L_\infty\)-algebra of the $S$-matrix \(\fL^\cm\), and the leading-order component \(\rho_1\) of the quasi-isomorphism \(\rho\colon\tilde\fL\to\fL^\cm\) is a bijection.
\end{theorem}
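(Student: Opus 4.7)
The plan is to observe that both \(\tilde\fL\) and \(\fL^\cm\) are minimal models of the same ambient cyclic \(L_\infty\)-algebra \(\fL\), obtained via homotopy transfer along two different but closely related deformation-retract data \((i,p,h+b)\) and \((i,p,h)\). Since the minimal model of an \(L_\infty\)-algebra is unique up to \(L_\infty\)-isomorphism (the fact invoked already in \cref{algebras_and_scattering_amplitudes}), an \(L_\infty\)-quasi-isomorphism between \(\tilde\fL\) and \(\fL^\cm\) must exist; the remaining task is to exhibit one explicitly and check that its leading component is the identity on \(\operatorname H^\bullet(V)\).

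The concrete construction I would use is a composition. First, by the homotopy transfer formulae \eqref{eq:injectionmap} and \eqref{murecursive} applied to the datum \((i,p,h+b)\), one has an \(L_\infty\)-quasi-isomorphism \(\phi^{(i,p,h+b)}\colon\tilde\fL\to\fL\) whose first component is \(\phi^{(i,p,h+b)}_1=i\). Second, the dual homotopy transfer construction (see e.g.\ \cite{Loday:2012aa,Jurco:2018sby}) produces from the datum \((i,p,h)\) a quasi-isomorphism in the opposite direction, \(\psi\colon\fL\to\fL^\cm\), with \(\psi_1=p\). Define
\begin{equation}
  \rho \coloneqq \psi\circ\phi^{(i,p,h+b)}\colon \tilde\fL\to\fL^\cm,
\end{equation}
where the composition is taken in the category of \(L_\infty\)-algebras (i.e.\ via the standard formula for composing multilinear component maps satisfying \eqref{eq:coherencemorphisms}). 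Being a composition of quasi-isomorphisms, \(\rho\) is itself a quasi-isomorphism, and its leading-order component is
\begin{equation}
  \rho_1 = \psi_1\circ\phi^{(i,p,h+b)}_1 = p\circ i = \id_{\operatorname H^\bullet(V)},
\end{equation}
which is tautologically a bijection, as required.

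The only step that involves any real work is ensuring that a dual \(L_\infty\)-morphism \(\psi\) with \(\psi_1=p\) genuinely exists and satisfies the coherence relations \eqref{eq:coherencemorphisms}; this is standard (and can be phrased as the statement that, for the special deformation retract used for the canonical minimal model, homotopy transfer upgrades both \(i\) and \(p\) to mutually quasi-inverse \(L_\infty\)-morphisms), but it is the one place where a citation or a brief recursive check is needed. After that, the cleanness of \(\rho_1=\id\) is automatic from \(p\circ i=\id_{\operatorname H^\bullet(V)}\), independently of the choice of \(\beta\) defining \(b=i\beta p\). I expect no obstacle beyond this: the cyclic structure plays no role in the statement (it is not claimed that \(\rho\) is cyclic), and the deformation \(b\) was designed precisely so that \((i,p,h+b)\) remains a deformation-retract datum on the same cohomology, so nothing in the construction can fail.
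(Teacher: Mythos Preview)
Your proposal is correct and is essentially the paper's own argument spelled out: the paper's proof is the one-liner that both \(\tilde\fL\) and \(\fL^\cm\) arise via homotopy transfer from \(\fL\) and that quasi-isomorphism is an equivalence relation, and immediately afterwards it records \(\rho_1=p\circ i=\id\) just as you do. Your explicit composition \(\rho=\psi\circ\phi^{(i,p,h+b)}\) merely unpacks that equivalence-relation step, and your flagged caveat about the existence of \(\psi\) with \(\psi_1=p\) is exactly the point covered by the paper's remark that the canonical datum is special.
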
\begin{proof}
  By the properties of homotopy transfer, both are quasi-isomorphic to \(\fL\), and quasi-isomorphism is an equivalence relation.
\end{proof}
Thus, in particular, there exist recursion formulae that relate ordinary amplitudes to generalised cut amplitudes and vice versa.
Note that there also exist quasi-isomorphisms in the other directions, $\fL\to \tilde{\fL}$ and $\fL^\cm\to \fL$, so that we get a span of quasi-isomorphisms of $L_\infty$ algebras:\footnote{Such spans are universal in that any quasi-isomorphism can be so represented \cite{Farahani:2023ptb}.}
\begin{equation}
  \begin{tikzcd}
   & \fL \ar[dr,shift left] \ar[dl,shift left]& \\
    \tilde{\mathfrak L} \ar[ur,shift left] &  & \mathfrak L^\cm \ar[ul,shift left].
  \end{tikzcd}
\end{equation}

\subsection[Recursion relations between the \texorpdfstring{\(S\)}{𝑆}-matrix and generalised cut amplitudes]{Recursion relations between the \(\boldsymbol S\)-matrix and generalised cut amplitudes}\label{recursion}

Let us be more explicit and draw out how the generalised cut amplitudes can be computed in terms of the $S$-matrix.

Given a cyclic $L_\infty$-algebra \(\mathfrak L=(V,\mu_k,\langle-,-\rangle)\) and the canonical deformation retract
\begin{equation}
  \label{eq:canonicaldr}
  \begin{tikzcd}
    (V,\mu_1)\ar[loop left, "h"] \rar[twoheadrightarrow,shift left, "p"] & (V^\cm,0) \lar[hookrightarrow,shift left, "i"],
  \end{tikzcd} \qquad \text{where} \qquad V^\cm\cong \operatorname H^\bullet(V),
\end{equation}
together with a generalised cut propagator \(b=i\beta p\), we also have the (not necessarily canonical) deformation retract
\begin{equation}
  \label{eq:cutdr}
  \begin{tikzcd}
    (V,\mu_1)\ar[loop left, "h+b"] \rar[twoheadrightarrow,shift left, "p"] & (\tilde{V},0) \lar[hookrightarrow,shift left, "i"],
  \end{tikzcd} \qquad \text{where} \qquad \tilde{V}\cong \operatorname H^\bullet(V).
\end{equation}
We can now apply homotopy transfer along these two retracts, as  explained in \cref{Smatrixdeformationretract}. The procedure yields formulae for the $L_\infty$-algebra structure on $V^\cm$ and $\tilde{V}$ in terms of sums over trees:
\begin{subequations}
  \begin{align}
    \label{eq:cmfromog}
    \mu_k^\cm&=\sum_{\hskip-1em\tau\in\{\text{rooted trees}\}\hskip-1em}\pm \tau(i,\mu,h,p),\\
    \label{eq:tildefromog}
    \tilde\mu_k&=\sum_{\hskip-1em\tau\in\{\text{rooted trees}\}\hskip-1em}\pm \tau(i,\mu,h+b,p),
  \end{align}
\end{subequations}
  where the sums range over all possible rooted trees, and the map $\tau(i,\mu,x,p)$ is defined by inserting $i$ on the leaves, $\mu_k$ on each $(k+1)$-ary vertex, and $x$ on each internal edge of the tree $\tau$, with $p$ on the root. This procedure gives the ordinary Feynman diagram expansion of the ordinary amplitudes, and an expansion of the generalised cut amplitudes in terms of the ordinary Feynman diagram expansion, supplemented by identical diagrams with the propagator $h$ replaced by $b$ in all possible ways.
  
  In fact, we can go one step further and express the generalised cut amplitudes in terms of the ordinary amplitudes directly by considering the retract\begin{equation}
  \begin{tikzcd}\label{eq:trivialretract}
    ({V}^\circ\cong\operatorname{H}^\bullet(V),0)\ar[loop left, "\beta"] \rar[shift left, "\id"] & (\tilde V\cong\operatorname{H}^\bullet(V),0) \lar[shift left, "\id"].
  \end{tikzcd}
\end{equation}
Interpreting  $(\tilde{V}\cong \operatorname H^\bullet(V), 0)$ as the cohomology of $(V^\cm\cong \operatorname H^\bullet(V), 0)$, we can perform homotopy transfer along \eqref{eq:trivialretract}, where the higher products on $V^\cm$ are the $\mu_k^\cm$ coming from homotopy transfer along the canonical deformation retract \eqref{eq:canonicaldr}. We can then express the generalised cut amplitudes as
 \begin{equation}
   \label{eq:tildefromcm}
    \tilde\mu_k=\sum_{\hskip-1em\tau\in\{\text{rooted trees}\}\hskip-1em}\pm \tau(\id,\mu^\cm,\beta,\id),
\end{equation}
where the vertices of the trees are now replaced with the products on $\fL^\cm$, the internal edges with  $\beta$, and the leaves and the root with the identity. Thus, the generalised cut amplitudes, encoded in the $\tilde\mu_k$,  are expressed in this manner through the familiar $S$-matrix elements, encoded in the $\mu^\cm_k$, associated to the canonical minimal model. Thus, the formulation of the generalised cut amplitudes in terms of \eqref{eq:tildefromcm} corresponds precisely to the blobology of \cite{Caron-Huot:2023vxl}, where the generalised cut amplitudes are constructed by gluing together ordinary amplitudes (the blobs) with cut propagators.

The expansion of \eqref{eq:tildefromcm}, upon further expansion of the $\mu_k^\cm$ using \eqref{eq:cmfromog}, obviously agrees with \eqref{eq:tildefromog} (recall that $b=i\beta p$).
For example, at four points, we have 
\begin{equation}\label{eq:mu3}
  \begin{split}
   \tilde{\mu}_3&=\begin{tikzpicture}[scale=0.6,baseline={([yshift=-1ex]current bounding box.center)}]
    \draw [thick] (-2,0) -- (0,-2);
    \draw [thick] (2,0) -- (0,-2);
    \draw [thick] (0,0) -- (0,-2);
    \draw [thick] (0,-2) -- (0,-3);
    \fill[gray!50] (0,-2) circle (0.4cm);
    \draw [thick] (0,-2) circle (0.4cm);
    \node at (0.02,-2) {\footnotesize{$\mu_3^\cm$}};
  \end{tikzpicture}
     +               
      \begin{tikzpicture}[scale=0.6,baseline={([yshift=-1ex]current bounding box.center)}]
    \draw [thick] (-2,0) -- (-0.6,-1.4);
    \draw [thick] (-0.35,-1.65) -- (0,-2);
    \draw [thick] (2,0) -- (0,-2);
    \draw [thick] (0,-2) -- (0,-3);
    \draw [thick] (-1,-1)--(0,0);
    \node at (-.5,-1.5) {\scriptsize{$\beta$}};
     \fill[gray!50] (-1,-1) circle (0.4cm);
    \draw [thick] (-1,-1) circle (0.4cm);
    \node at (-0.98,-1) {\footnotesize$\mu_2^\cm$};
    \fill[gray!50] (0,-2) circle (0.4cm);
    \draw [thick] (0,-2) circle (0.4cm);
    \node at (0.02,-2) {\footnotesize$\mu_2^\cm$};
  \end{tikzpicture}
  -
  \begin{tikzpicture}[scale=0.6,baseline={([yshift=-1ex]current bounding box.center)}]
    \draw [thick] (-2,0) -- (0,-2);
    \draw [thick] (2,0) -- (0.65,-1.35);
    \draw [thick] (0,-2) -- (0.4,-1.6);
    \draw [thick] (0,-2) -- (0,-3);
    \draw [thick] (1,-1)--(0,0);
    \node at (0.55,-1.45) {\scriptsize{$\beta$}};
     \fill[gray!50] (1,-1) circle (0.4cm);
    \draw [thick] (1,-1) circle (0.4cm);
    \node at (1.02,-1) {\footnotesize$\mu_2^\cm$};
    \fill[gray!50] (0,-2) circle (0.4cm);
    \draw [thick] (0,-2) circle (0.4cm);
    \node at (0.02,-2) {\footnotesize$\mu_2^\cm$};
  \end{tikzpicture}
  -
  \begin{tikzpicture}[scale=0.6,baseline={([yshift=-1ex]current bounding box.center)}]
    \draw [thick] (-2,0) -- (-0.6,-1.4);
    \draw [thick] (-0.35,-1.65) -- (0,-2);    
    \draw [thick] (1,-1) -- (0,-2);
    \draw [thick] (0,-3) -- (0,-2);
    \draw [thick] (1,-1)--(0,0);
    \draw [thick] (-1,-1) -- (0.3,-0.57);
    \draw [thick] (2,0) -- (0.7,-.45);
    \node at (-.5,-1.5) {\scriptsize{$\beta$}};
     \fill[gray!50] (-1,-1) circle (0.4cm);
    \draw [thick] (-1,-1) circle (0.4cm);
    \node at (-0.98,-1) {\footnotesize$\mu_2^\cm$};
    \fill[gray!50] (0,-2) circle (0.4cm);
    \draw [thick] (0,-2) circle (0.4cm);
    \node at (0.02,-2) {\footnotesize$\mu_2^\cm$};
  \end{tikzpicture}\\
    &=
    \begin{tikzpicture}[scale=0.6,baseline={([yshift=-1ex]current bounding box.center)}]
    \draw [thick] (-2,0) -- (0,-2);
    \draw [thick] (2,0) -- (0,-2);
    \draw [thick] (0,0) -- (0,-2);
    \draw [thick] (0,-2) -- (0,-3);
    \fill[gray!50] (0,-2) circle (0.4cm);
    \draw [thick] (0,-2) circle (0.4cm);
    \node at (0.02,-2) {\footnotesize{$\mu_3$}};
    \node at (0,-3.5) {\footnotesize\(p\)};
    \node at (0,0.5) {\footnotesize\(i\)};
    \node at (-2,0.5) {\footnotesize\(i\)};
    \node at (2,0.5) {\footnotesize\(i\)};
  \end{tikzpicture}
  +
       \begin{tikzpicture}[scale=0.6,baseline={([yshift=-1ex]current bounding box.center)}]
    \draw [thick] (-2,0) -- (-0.6,-1.4);
    \draw [thick] (-0.35,-1.65) -- (0,-2);
    \draw [thick] (2,0) -- (0,-2);
    \draw [thick] (0,-2) -- (0,-3);
    \draw [thick] (-1,-1)--(0,0);
    \node at (-.5,-1.5) {\scriptsize{$h$}};
    \node at (0,-3.5) {\footnotesize\(p\)};
    \node at (0,0.5) {\footnotesize\(i\)};
    \node at (-2,0.5) {\footnotesize\(i\)};
    \node at (2,0.5) {\footnotesize\(i\)};
     \fill[gray!50] (-1,-1) circle (0.4cm);
    \draw [thick] (-1,-1) circle (0.4cm);
    \node at (-0.98,-1) {\footnotesize$\mu_2$};
    \fill[gray!50] (0,-2) circle (0.4cm);
    \draw [thick] (0,-2) circle (0.4cm);
    \node at (0.02,-2) {\footnotesize$\mu_2$};
  \end{tikzpicture}
  -
  \begin{tikzpicture}[scale=0.6,baseline={([yshift=-1ex]current bounding box.center)}]
    \draw [thick] (-2,0) -- (0,-2);
    \draw [thick] (2,0) -- (0.65,-1.35);
    \draw [thick] (0,-2) -- (0.4,-1.6);
    \draw [thick] (0,-2) -- (0,-3);
    \draw [thick] (1,-1)--(0,0);
    \node at (0.55,-1.45) {\scriptsize{$h$}};
     \fill[gray!50] (1,-1) circle (0.4cm);
    \draw [thick] (1,-1) circle (0.4cm);
    \node at (1.02,-1) {\footnotesize$\mu_2$};
    \fill[gray!50] (0,-2) circle (0.4cm);
    \draw [thick] (0,-2) circle (0.4cm);
    \node at (0.02,-2) {\footnotesize$\mu_2$};\node at (0,-3.5) {\footnotesize\(p\)};
    \node at (0,0.5) {\footnotesize\(i\)};
    \node at (-2,0.5) {\footnotesize\(i\)};
    \node at (2,0.5) {\footnotesize\(i\)};
  \end{tikzpicture}
  -
  \begin{tikzpicture}[scale=0.6,baseline={([yshift=-1ex]current bounding box.center)}]
    \draw [thick] (-2,0) -- (-0.6,-1.4);
    \draw [thick] (-0.35,-1.65) -- (0,-2);
    \draw [thick] (1,-1) -- (0,-2);
    \draw [thick] (0,-3) -- (0,-2);
    \draw [thick] (1,-1)--(0,0);
    \draw [thick] (-1,-1) -- (0.3,-0.57);
    \draw [thick] (2,0) -- (0.7,-.45);
    \node at (-.5,-1.5) {\scriptsize{$h$}};
    \node at (0,-3.5) {\footnotesize\(p\)};
    \node at (0,0.5) {\footnotesize\(i\)};
    \node at (-2,0.5) {\footnotesize\(i\)};
    \node at (2,0.5) {\footnotesize\(i\)};
     \fill[gray!50] (-1,-1) circle (0.4cm);
    \draw [thick] (-1,-1) circle (0.4cm);
    \node at (-0.98,-1) {\footnotesize$\mu_2$};
    \fill[gray!50] (0,-2) circle (0.4cm);
    \draw [thick] (0,-2) circle (0.4cm);
    \node at (0.02,-2) {\footnotesize$\mu_2$};
  \end{tikzpicture}\\
    &\qquad+\begin{tikzpicture}[scale=0.6,baseline={([yshift=-1ex]current bounding box.center)}]
    \draw [thick] (-2,0) -- (-0.6,-1.4);
    \draw [thick] (-0.35,-1.65) -- (0,-2);
    \draw [thick] (2,0) -- (0,-2);
    \draw [thick] (0,-2) -- (0,-3);
    \draw [thick] (-1,-1)--(0,0);
    \node at (-.5,-1.5) {\scriptsize{$b$}};
    \node at (0,-3.5) {\footnotesize\(p\)};
    \node at (0,0.5) {\footnotesize\(i\)};
    \node at (-2,0.5) {\footnotesize\(i\)};
    \node at (2,0.5) {\footnotesize\(i\)};    
     \fill[gray!50] (-1,-1) circle (0.4cm);
    \draw [thick] (-1,-1) circle (0.4cm);
    \node at (-0.98,-1) {\footnotesize$\mu_2$};
    \fill[gray!50] (0,-2) circle (0.4cm);
    \draw [thick] (0,-2) circle (0.4cm);
    \node at (0.02,-2) {\footnotesize$\mu_2$};
  \end{tikzpicture}
  -
  \begin{tikzpicture}[scale=0.6,baseline={([yshift=-1ex]current bounding box.center)}]
    \draw [thick] (-2,0) -- (0,-2);
    \draw [thick] (2,0) -- (0.65,-1.35);
    \draw [thick] (0,-2) -- (0.4,-1.6);
    \draw [thick] (0,-2) -- (0,-3);
    \draw [thick] (1,-1)--(0,0);
    \node at (0.55,-1.45) {\scriptsize{$b$}};
     \fill[gray!50] (1,-1) circle (0.4cm);
    \draw [thick] (1,-1) circle (0.4cm);
    \node at (1.02,-1) {\footnotesize$\mu_2$};
    \fill[gray!50] (0,-2) circle (0.4cm);
    \draw [thick] (0,-2) circle (0.4cm);
    \node at (0.02,-2) {\footnotesize$\mu_2$};\node at (0,-3.5) {\footnotesize\(p\)};
    \node at (0,0.5) {\footnotesize\(i\)};
    \node at (-2,0.5) {\footnotesize\(i\)};
    \node at (2,0.5) {\footnotesize\(i\)};
  \end{tikzpicture}
  -
  \begin{tikzpicture}[scale=0.6,baseline={([yshift=-1ex]current bounding box.center)}]
    \draw [thick] (-2,0) -- (-0.6,-1.4);
    \draw [thick] (-0.35,-1.65) -- (0,-2);
    \draw [thick] (1,-1) -- (0,-2);
    \draw [thick] (0,-3) -- (0,-2);
    \draw [thick] (1,-1)--(0,0);
    \draw [thick] (-1,-1) -- (0.3,-0.57);
    \draw [thick] (2,0) -- (0.7,-.45);
    \node at (-.5,-1.5) {\scriptsize{$b$}};
    \node at (0,-3.5) {\footnotesize\(p\)};
    \node at (0,0.5) {\footnotesize\(i\)};
    \node at (-2,0.5) {\footnotesize\(i\)};
    \node at (2,0.5) {\footnotesize\(i\)};
     \fill[gray!50] (-1,-1) circle (0.4cm);
    \draw [thick] (-1,-1) circle (0.4cm);
    \node at (-0.98,-1) {\footnotesize$\mu_2$};
    \fill[gray!50] (0,-2) circle (0.4cm);
    \draw [thick] (0,-2) circle (0.4cm);
    \node at (0.02,-2) {\footnotesize$\mu_2$};
  \end{tikzpicture}.
  \end{split}
\end{equation}
As further explained in section \ref{SKamps}, this corresponds to the Schwinger--Keldysh approach to generalised cut amplitudes developed in \cite{Caron-Huot:2023vxl}.

The first few orders of the quasi-isomorphism \(\rho\colon\tilde{\mathfrak L}\to \fL^\cm\) from the \(L_\infty\)-algebra of generalised cut amplitudes to that of ordinary amplitudes given by \cref{thm:keldysh-schwinger-is-quasi-iso} are
\begin{subequations}\label{eq:explicit-formulas}
\begin{align}
  \rho_1(\varphi)          & =pi(\varphi)=\varphi,                                                                  \\
  \rho_2(\varphi_1,\varphi_2) & =p b \mu_2(i(\varphi_1),i(\varphi_2)).
\end{align}
\end{subequations}

The discussion thus far has been restricted to the tree level analysis. To generalise to the loop level one transitions to \emph{quantum} or \emph{loop} $L_\infty$-algebras \cite{Zwiebach:1992ie,Markl:1997bj,Pulmann:2016aa,Doubek:2017naz, Jurco:2019yfd,Saemann:2020oyz, jurčo2024lagrangian}. This, in turn, corresponds to  including loops (formally, i.e.~prior to regularisation and renormalisation) in the BV formalism.  One simply adds\footnote{When the classical BV action obeys the quantum master equation, which is the case in many physically relevant examples.} the   BV Laplacian, which is given by\footnote{In the dual graded manifold picture.} 
\begin{equation}
  \Delta = 
(-1)^{\left|\Phi_i\right|+1} \frac{\overleftarrow{\delta}}{\delta \Phi_i} \frac{\overleftarrow{\delta}}{\delta \Phi_i^{+}},
\end{equation}
to the BV differential. It is then straightforward to construct loops from trees using  homotopy transfer perturbed by $\hbar\Delta$, as described in \cite{Pulmann:2016aa,Doubek:2017naz, Jurco:2019yfd,Saemann:2020oyz}. Diagrammatically, this amounts to allowing for loops in the usual manner for the physically relevant cases, cf.~\cite{Saemann:2020oyz}. Consequently, the generalised cut amplitudes at loop-level are immediately recovered from the perturbed deformation retract data; one simply includes the cuts on loop diagrams in precisely the expected manner.

\section{Keld-ish amplitudes: a toy example}
For a typical BV action with corresponding cyclic $L_\infty$-algebra $\fL=(V, \mu_k,\langle-,-\rangle)$, the cohomology  $\operatorname H^\bullet(V)$ is only nontrivial in degree \(1\), where the on-shell fields live,  and degree \(2\), where the corresponding on-shell antifields reside.
Thus,
$\operatorname H^\bullet(V)=\operatorname H^1(V)\oplus \operatorname H^2(V)$.
The cyclic structure of degree \(-3\) pairs \(V^1\) and \(V^2\) and, thus, provides an isomorphism \(V^1\cong(V^2)^*\) (ignoring the usual functional-analytic subtleties associated with infinitely many dimensions). However, in most physical examples, \(V^1\) comes with an inner product, so that we may identify \(V^1\cong V^2\).
This descends to the cohomology, so that we have the isomorphism \(s\colon\operatorname H^2( V)\to\operatorname H^1( V)\), which we may regard as the degree-shift map upon identifying \(V^1\cong V^2\).

In this generic setting, the simplest possible example of generalised cut amplitudes is given by the  generalised cut propagator  \(\beta=s\), so that $b= i sp$ first projects antifields  to their on-shell component, then degree-shifts them to on-shell fields and then includes them in the full space of fields. 

The resulting generalised cut amplitudes are reminiscent of the time-folded Schwinger--Keldysh amplitudes; let us call them \emph{Keld-ish amplitudes} for brevity. These correspond to Feynman diagrams in which every internal edge can be cut, but---unlike genuine Schwinger--Keldish amplitudes---there are no labels for vertices or external legs representing the time-folds.

\subsection{The Keld-ish amplitudes of scalar field theory}

To give a familiar example of the Keld-ish amplitudes, let us consider scalar field theory with classical action
\begin{equation}\label{scalaraction}
    S[\varphi]\coloneqq\int_{\mathbb{R}^{n}} \mathrm{~d}^n x\left(\frac{1}{2} \varphi\left(-\square-m^2\right) \varphi-\frac{\lambda_3}{3!} \varphi^3-\frac{\lambda_4}{4!} \varphi^4+\dotsb\right).
\end{equation}
Since the gauge symmetry is trivial, the BV fields consist only of the physical scalar field $\varphi\in C^\infty(\mathbb{R}^n)$ and its antifield $\varphi^+\in C^\infty(\mathbb{R}^n)$ and $S[\varphi]=S_{\text{BV}}[\varphi]$.

Let us now briefly recall the homotopy-algebraic perspective. See \cite{Macrelli:2019afx} for full details. The corresponding $L_\infty$-algebra $\fL_\varphi$  has graded vector space $V_\varphi=C^\infty(\mathbb{R}^n)[-1]\oplus C^\infty(\mathbb{R}^n)[-2]$ of fields and antifields, respectively. The only nontrivial products map fields to antifields, 
\begin{equation}
\mu_k \colon (C^\infty(\mathbb{R}^n)[-1])^{\wedge k}\to C^\infty(\mathbb{R}^n)[-2],
\end{equation}
and may be read-off directly from \eqref{scalaraction} as

\begin{equation}\label{scalarproducts}
\mu_1\left(\varphi_1\right):=\left(-\square-m^2\right) \varphi_1, \quad \mu_k\left(\varphi_1, \ldots, \varphi_k\right):=-\lambda_{k-1} \varphi_1\cdots  \varphi_k,
\end{equation}
where the right-hand sides are understood to belong to the space of antifields $C^\infty(\mathbb{R}^n)[-2]$.

The underlying cochain complex $(V_\varphi, \mu_1)$ is
\begin{equation}\label{eq:scalarmodel}
  \begin{tikzcd}[ampersand replacement=\&]
	0 \&\& {C^\infty(\mathbb{R}^n)[-1]} \&\& {} \& {C^\infty(\mathbb{R}^n)[-2]} \&\& 0
	\arrow["0", from=1-1, to=1-3]
	\arrow["{\mu_1\,:=\,-\Box-m^2}", from=1-3, to=1-6]
	\arrow["0", from=1-6, to=1-8]
  \end{tikzcd}
\end{equation}
and the  cohomology $\operatorname H^\bullet(V_\varphi)$ consists of on-shell fields and antifields.  The canonical homotopy $h$ is only nontrivial on the  antifields,  $h: C^\infty(\mathbb{R}^n)[-2]\to C^\infty(\mathbb{R}^n)[-1]$. It is given by  the Feynman propagator\footnote{We leave the degree-shift isomorphism implicit.}
\begin{equation}
    h=\frac1{p^2+m^2-\mathrm i\epsilon}. 
\end{equation}
 The expected $S$-matrix then follows from the canonical minimal model  $\fL_\varphi^\cm$ once the cyclic structure is defined appropriately \cite{Macrelli:2019afx}.

It follows immediately from \cref{recursion} that the Keld-ish amplitudes are computed by the usual Feynman rules except that the propagator \(h\) is modified to be
\begin{equation}
  \tilde h = \frac1{p^2+m^2-\mathrm i\epsilon}+\delta(p^2+m^2),
\end{equation}
where the second term is the generalised cut propagator $b$.
Notice that, unlike ordinary Keldysh--Schwinger amplitudes, there are no labels on the  vertices (which are merely given by \eqref{scalarproducts}), and the delta function picks out both branches of the mass shell, not just the positive-energy shell. (Since there are no labels, there is no canonical way to establish orientation of the momenta so as to fix which is of positive energy and which is of negative energy.)

Thus, the Keld-ish amplitudes will be constructed from Feynman diagrams of the form depicted in \cref{eq:mu3}.
For example, at four points, we have

\begin{equation}
  \begin{split}
    \label{eq:fourpoints}
        \begin{tikzpicture}[scale=0.4,baseline={([yshift=-0.5ex]current bounding box.center)}]
      \draw [thick] (-1.5,0) -- (1.5,-3);
      \draw [thick] (-1.5,-3) -- (1.5,0);
    \end{tikzpicture}
    +
    \begin{tikzpicture}[scale=0.4,baseline={([yshift=-0.5ex]current bounding box.center)}]
      \draw [thick] (-1,0) -- (0,-1);
      \draw [thick] (0,-1) -- (0,-2);
      \draw [thick] (1,0) -- (0,-1);
      \draw [thick] (-1,-3) -- (0,-2);
      \draw [thick] (1,-3) -- (0,-2);
    \end{tikzpicture}
    +
    \begin{tikzpicture}[scale=0.4,baseline={([yshift=-0.5ex]current bounding box.center)}]
      \draw [thick] (-1,0) -- (0,-1);
      \draw [thick,middlearrow={|}] (0,-1) -- (0,-2);
      \draw [thick] (1,0) -- (0,-1);
      \draw [thick] (-1,-3) -- (0,-2);
      \draw [thick] (1,-3) -- (0,-2);
    \end{tikzpicture}
    +
    \begin{tikzpicture}[scale=0.4,baseline={([yshift=-0.5ex]current bounding box.center)}]
      \draw [thick] (-1,1) -- (0,0);
      \draw [thick] (-1,-1) -- (0,0);
      \draw [thick] (0,0) -- (1,0);
      \draw [thick] (1,0) -- (2,1);
      \draw [thick] (1,0) -- (2,-1);
    \end{tikzpicture}
    +
    \begin{tikzpicture}[scale=0.4,baseline={([yshift=-0.5ex]current bounding box.center)}]
      \draw [thick] (-1,1) -- (0,0);
      \draw [thick] (-1,-1) -- (0,0);
      \draw [thick, middlearrow={|}] (0,0) -- (1,0);
      \draw [thick] (1,0) -- (2,1);
      \draw [thick] (1,0) -- (2,-1);
    \end{tikzpicture}
    +
    \begin{tikzpicture}[scale=0.4,baseline={([yshift=-0.5ex]current bounding box.center)}]
      \draw [thick] (2,1) -- (1,0.5);
      \draw [thick] (2,-1) -- (1,-0.5);
      \draw [thick] (1,0.5) -- (1,-0.5);
      \draw [thick] (1,0.5) -- (0.43,0.0725);
      \draw [thick] (-1,-1) -- (0.23,-0.0775);
      \draw [thick] (1,-0.5) -- (-1,1);
    \end{tikzpicture}
    +
    \begin{tikzpicture}[scale=0.4,baseline={([yshift=-0.5ex]current bounding box.center)}]
      \draw [thick] (2,1) -- (1,0.5);
      \draw [thick] (2,-1) -- (1,-0.5);
      \draw [thick, middlearrow={|}] (1,0.5) -- (1,-0.5);
      \draw [thick] (1,0.5) -- (0.43,0.0725);
      \draw [thick] (-1,-1) -- (0.23,-0.0775);
      \draw [thick] (1,-0.5) -- (-1,1);
    \end{tikzpicture}
    +
    \text{loops},
  \end{split}
\end{equation}
where the cuts on internal legs means apply $b$. Hence, the tree-level four-point Keld-ish amplitude is
\begin{equation}
A_{\text{Keld-ish}, 4}^{\text{tree-level}} = \lambda_3^2
  \left(\frac1{s+m^2}+\frac1{t+m^2}+\frac1{u+m^2}+
  \delta(s+m^2)+
  \delta(t+m^2)+
  \delta(u+m^2)
  \right)
  +\lambda_4,
\end{equation}
where \(s\), \(t\), and \(u\) are the Mandelstam variables.
Thus we retrieve the usual expression for the cut amplitudes in the sense of  \cite[\S V.A]{britto2024cuttingedge}.

\section{Schwinger--Keldysh amplitudes as generalised cut amplitudes}\label{SKamps}

As alluded to throughout the paper, the Schwinger--Keldysh amplitudes fit into the  homotopy-algebraic formalism of generalised cuts. We will in this section make this statement precise and work out explicitly the expression for the corresponding generalised cut propagator $b_{\text{SK}}$.
\subsection{Review of the Schwinger--Keldysh formalism}\label{ssec:path_integral}
Recall that the scattering amplitudes of quantum field theory are given by the Lehmann--Symanzik--Zimmermann reduction of time-ordered correlators of local operators.
In the Keldysh--Schwinger formalism, we instead look at correlators of local operators ordered in different ways and the Lehmann--Symanzik--Zimmermann reductions thereof.
Specifically, let us consider correlators of the ansatz
\begin{equation}\label{eq:time-ordering-ansatz}
  \left\langle \dotsm \mathcal T\mleft\{\varphi_{i_3} \dotsm \varphi_{i_2+1}\mright\} \overline{\mathcal T}\mleft\{\varphi_{i_2} \dotsm \varphi_{i_1+1}\mright\} \mathcal T\mleft\{\varphi_{i_1} \dotsm \varphi_1\mright\}\right\rangle,
\end{equation}
where \(\mathcal T\) denotes time ordering and \(\overline{\mathcal T}\) denotes anti-time ordering (i.e.~in the opposite order as the usual time ordering).\footnote{Such an ansatz is universal in that, using identities such as
\begin{equation}
  \begin{aligned}
    \overline{\mathcal T}\mleft\{\varphi_1 \varphi_2\mright\} & \coloneqq \varphi_1 \varphi_2 \Theta\mleft(x_2^0-x_1^0\mright)+\varphi_2 \varphi_1 \Theta\mleft(x_1^0-x_2^0\mright)                                                                             \\
                                                        & =-\mathcal T\mleft\{\varphi_1 \varphi_2\mright\}+\mathcal T\mleft\{\varphi_1\mright\} \mathcal T\mleft\{\varphi_2\mright\}+\mathcal T\mleft\{\varphi_2\mright\} \mathcal T\mleft\{\varphi_1\mright\},
  \end{aligned}
\end{equation}
any product of time-ordered, anti-time-ordered, or unordered operators can be expressed as a linear combination of correlators of the form \eqref{eq:time-ordering-ansatz}.}
One may think of such a correlator as path-ordered along a contour \(\mathcal C\) that starts off from the far past, goes to the far future (the operators \(\varphi_{i_1}\dotsm\varphi_1\) are inserted here), and then turns around and goes back in time to the far past (the operators \(\varphi_{i_2}\dotsm\varphi_{i_1+1}\) are inserted here), then turns around and goes back to the future, and so on. Thus, \(\mathcal C\) consists of multiple segments that we shall call `time folds', either oriented forward (i.e.\ from the far past to the far future) or oriented backward (i.e.\ from the far future to the far past).\footnote{Technically, one should define \(\mathcal C\) in complexified time coordinates. For more details, see e.g.~\cite{cond-mat/0506130}.}

Such correlators can be computed by path integrals along the contour $\mathcal{C}$:
\begin{equation}\label{timecontourintegral}
  Z_{\text{SK}}[J] \coloneqq\int\mathrm D\phi\,{\mathcal{C}} \exp \left(\mathrm i \int_{\mathcal{C}}\mathrm dt\int\mathrm d^{d-1}\vec x\, \mathcal{L}[\varphi(t,\vec x)]+{J}(t,\vec x) \varphi(t,\vec x)\right),
\end{equation}
where \(\mathcal C\exp\) is the path-ordered exponential along \(\mathcal C\). 
If the contour \(\mathcal C\) consists of $N$ time folds, writing the source \(J\) as one field \(J_{(i)}\) for each time segment \((i)\), this may be written as 
\begin{equation}
\begin{aligned}
  \MoveEqLeft Z_{S K}\left[{J}_{(1)}, \ldots {J}_{(N)}\right]\\&=
  \int\mathrm D\phi\,{\mathcal{C}} \exp\left(\mathrm i\sum_{i=1}^{N}(-1)^{i+1} \int\mathrm d^dx\, \left(\mathcal{L}\left[\varphi(x)\right]-{J}_{(i)}(x)\varphi(x)\right)\right)\\
  &=
  \int\mathrm D\phi_{(1)}\dotsm\mathrm D\phi_{(N)}\,{\mathcal T} \exp\left(\mathrm i\int\mathrm d^dx\sum_{i=1}^{N}(-1)^{i+1} \left(\mathcal{L}\left[\varphi_{(i)}(x)\right]-{J}_{(i)}(x)\varphi_{(i)}(x)\right)\right).
\end{aligned}
\end{equation}
Now, the path integral looks like a normal time-ordered path integral --- but with \(N\) copies of the fields, of which  of which \(\lfloor N/2\rfloor\) come with wrong-sign actions. That is, of the \(N\) copies of the fields, \(\lceil N/2\rceil\) travel forward in time and \(\lfloor N/2\rfloor\), depending on which time fold they live on.

The \emph{Schwinger--Keldysh amplitudes} are then obtained by taking a Lehmann--Symanzik--Zimmermann reduction of such non-time-ordered correlators. Such `amplitudes' can be obtained by a set of Feynman rules given in \cite{VELTMAN1963186,2008.11730,Caron-Huot:2023vxl} as follows. Suppose that we have \(N\) time folds labelled $(1), (2),\dotsc, (N)$, where the time folds with odd labels $(2n+1)$ are forward, those with even labels $(2n)$ are backward, and the path is $(1)$--$(2)$--$(3)\dotsb (N)$.  To avoid confusion between time folds and other common labels, following \cite{Caron-Huot:2023vxl} we will also use  I, II, III, IV, \ldots\ to denote $(1), (2), (3), (4)\dotsc$.  Then the rules are:
\begin{itemize}
  \item All vertices are labelled I, II, III, IV, etc. 
  \item The even-label vertices come with extra minus sign.
  \item For propagators between vertices of the same odd label, the propagator is
        \begin{equation}\frac{1}{p^2+m^2-\mathrm i\epsilon}\end{equation}
        as usual.
  \item For propagators between vertices of the same even label, the propagator is
        \begin{equation}-\frac1{p^2+m^2+\mathrm i\epsilon}.\end{equation}
        Notice that, in addition to the overall sign, there is an additional sign in front of the \(\epsilon\).
  \item Propagator between different adjacent labels is \(2\pi\mathrm i\delta(p^2+m^2)\Theta(p^0)\) where \(p\) flows in increasing-label direction.
\end{itemize}

\subsection{Schwinger--Keldysh amplitudes as generalised cut amplitudes}
Given the form of the Schwinger--Keldysh Feynman rules, it is plausible that Schwinger--Keldysh amplitudes are an instance of generalised cut amplitudes in our sense and hence are quasi-isomorphic to ordinary amplitudes. We now make this explicit.

\paragraph{\(\boldsymbol N\)-tupled theory}

To be concrete, suppose that the original theory is classically described by a cyclic \(L_\infty\)-algebra \(( V,\mu_k,\langle-,-\rangle)\), and we are considering Schwinger--Keldysh amplitudes with labels \(1,\dotsc,N\). As remarked in \cref{ssec:path_integral}, at any point in time, we have \(N\) copies of the field content of the theory, of which \(\lceil N/2\rceil\) travel forward in time and \(\lfloor N/2\rfloor\) backward. In the homotopy-algebraic formalism this translates into the cyclic  $L_\infty$-algebra 
\begin{equation}
  \fL_{N}=(V_{N}, \mu_k^{N}, \langle-,-\rangle_{N}).
\end{equation}
 The underlying graded vector space \( V_{N}\) is concretely
\begin{equation}
  V_{N}= V\otimes\mathbb R^N= V^{(1)}\oplus V^{(2)}\oplus\dotsb\oplus V^{(N)},
\end{equation}
where \( V^{(1)},\dotsc, V^{(N)}\) are \(N\) isomorphic copies of the underlying graded vector space of \( V\). The cyclic structure and $L_\infty$-algebra operations are given, respectively, by 
\begin{equation}
  \langle-,-\rangle_N=\sum_{i=1}^N \langle-,-\rangle^{(i)}, \quad
  \mu_{k}^{N} = \sum_{i=1}^N(-1)^{i-1}\mu_k^{(i)},
\end{equation}
where $\langle-,-\rangle^{(i)}$, and $\mu_k^{(i)}$ are the cyclic structure and $\mu_k$ of the original theory,\footnote{We leave implicit the embedding of $V^{(i)}$ into $V_N$.} but restricted to $V^{(i)}$. \footnote{Technically, we should make sure to obtain the correct sign of \(\epsilon\) in the kinetic operator \(\mu_1=\square-m^2+\mathrm i\epsilon\). Assuming a single massive scalar field for simplicity, we have
  \begin{equation}\label{eq:propagator-prescription}
    \mu_1 = \sum_{i=1}^N(-1)^{i-1}(\square^{(i)}-m^2+(-1)^{i-1}\mathrm i\epsilon)=\sum_{i=1}^N(-1)^{i-1}(\square^{(i)}-m^2)+\mathrm i\epsilon.
  \end{equation}
  That is, we impose a \(+\mathrm i\epsilon\) prescription uniformly.
}
The sign \((-1)^{i-1}\) accounts for the fact that odd values of \(i\) correspond to fields travelling forward in time while even values correspond to fields travelling backward in time.\footnote{The $L_\infty$-algebra $\fL_N$ is thus obtained as the direct sum of $L_\infty$-algebras $\bigoplus_{i=1}^N \fL^{(i)}$ where $\fL^{(i)}\coloneqq(V^{(i)},(-1)^{i-1}\mu_k^{(i)},\langle-,-\rangle^{(i)})$.}

The homotopy Maurer-Cartan action then takes the form
\begin{equation}
  S_{\fL_{N}}=\sum_{i=1}^N(-1)^{i-1}S_{\fL}[\varphi_i],
\end{equation}
where \(\varphi_\text{1},\dotsc,\varphi_N\) are the fields of \( V^{(1)},\dotsc, V^{(N)}\), respectively. The even \(i\) correspond to particles going backwards in time. To compute scattering amplitudes for \( V\), we can use \(N\) copies of the standard propagator:\footnote{to be consistent with \eqref{eq:propagator-prescription}, the propagator \(h\) should always use the \(+\mathrm i\epsilon\) contour.}
\begin{equation}\label{Ntupleh}
  h = \sum_{i=1}^N(-1)^{i-1}h^{(i)}.
\end{equation}

The tree and loop scattering amplitudes for \( V_N\) are then simply \(N\) copies of those of \( V\); there is no new information. The allowed Feynman diagrams include
\begin{equation}
  \begin{tikzpicture}[scale=0.7,baseline={([yshift=-0.5ex]current bounding box.center)}]
    \draw [thick] (-1,1) -- (0,0);
    \draw [thick] (-1,-1) -- (0,0);
    \draw [thick] (1,1) -- (0,0);
    \draw [thick] (1,-1) -- (0,0);
    \filldraw [gray] (0,0) circle (15pt);
    \draw [black, thick] (0,0) circle (15pt);
    \node at (-1.5,1.43) {I};
    \node at (1.5,1.43) {I};
    \node at (-1.5,-1.43) {I};
    \node at (1.5,-1.43) {I};
  \end{tikzpicture}
  \qquad\text{and}\qquad
  \begin{tikzpicture}[scale=0.7,baseline={([yshift=-0.5ex]current bounding box.center)}]
    \draw [thick] (-1,1) -- (0,0);
    \draw [thick] (-1,-1) -- (0,0);
    \draw [thick] (1,1) -- (0,0);
    \draw [thick] (1,-1) -- (0,0);
    \filldraw [gray] (0,0) circle (15pt);
    \draw [black,thick ] (0,0) circle (15pt);
    \node at (-1.5,1.43) {II};
    \node at (1.5,1.43) {II};
    \node at (-1.5,-1.43) {II};
    \node at (1.5,-1.43) {II};
  \end{tikzpicture},
\end{equation}
but there are no connected diagrams connecting external legs with different labels, since there is no propagator that connects vertices of different labels.

\paragraph{Generalised cut propagator} The $N$-tupled theory merely generates labelled identical copies of the original $S$-matrix, with no diagrams connecting them. This corresponds to the canonical $S$-matrix of the $N$-tupled theory produced by homotopy transfer using the canonical propagator \eqref{Ntupleh}.

To obtain the Schwinger--Keldysh diagrams, we introduce a generalised cut propagator to implement the cut edges. In this case, we have
\begin{equation}
  b= 2\pi\mathrm i\sum_{i=1}^{N-1}\left(\delta(p^2+m^2)\Theta(p^0)\sigma_{i\rightarrow i+1}+\delta(p^2+m^2)\Theta(-p^0)\sigma_{i+1\rightarrow i}\right)
\end{equation}
where \(\sigma_{i\to j}\) is the degree $-1$ operator that shifts the label from \(i\) to \(j\). Notice that we have two cases, depending on whether the momentum is flowing \(i\to i+1\) (in which case we require \(p^0>0\)) or \(i+1\to i\) (in which case we require \(p^0<0\)).

Then it is manifest from inspection that the above \(b\) implements the Feynman rules that define the Schwinger--Keldysh amplitudes. For example
\begin{equation}
  \begin{tikzpicture}[scale=0.7,baseline={([yshift=-0.5ex]current bounding box.center)}]
    \draw [thick]  (-1,1) -- (0,0);
    \draw [thick]  (-1,-1) -- (0,0);
    \draw [thick]  (1,1) -- (0,0);
    \draw [thick]  (1,-1) -- (0,0);
    \filldraw [gray] (0,0) circle (15pt);
    \draw [thick] (0,0) circle (15pt);
    \node at (-1.5,1.43) {II};
    \node at (1.5,1.43) {II};
    \node at (-1.4,-1.43) {I};
    \node at (1.4,-1.43) {I};
  \end{tikzpicture}
  =
  \begin{tikzpicture}[scale=0.5,baseline={([yshift=-0.5ex]current bounding box.center)}]
    \draw [thick]  (-1,0) -- (0,-1);
    \draw [thick,middlearrow={|}] (0,-1) -- (0,-2);
    \draw [thick]  (1,0) -- (0,-1);
    \draw [thick] (-1,-3) -- (0,-2);
    \draw [thick]  (1,-3) -- (0,-2);
    \node at (-1.5,0.5) {II};
    \node at (1.5,0.5) {II};
    \node at (-1.5,-3.5) {I};
    \node at (1.4,-3.5) {I};
  \end{tikzpicture}
  +
  \begin{tikzpicture}[scale=0.5,baseline={([yshift=-0.5ex]current bounding box.center)}]
    \draw [thick, middlearrow={|}]  (0,-1) -- (0,-2);
    \draw [thick, middlearrow={|}]  (1,-1) -- (1,-2);
    \draw [thick]  (0,-1) -- (1,-1);
    \draw [thick]  (0,-2) -- (1,-2);
    \draw [thick] (-1,0) -- (0,-1);
    \draw [thick] (1,-1) -- (2,0);
    \draw [thick] (-1,-3) -- (0,-2);
    \draw [thick] (2,-3) -- (1,-2);
    \node at (-1.5,0.5) {II};
    \node at (2.5,0.5) {II};
    \node at (-1.5,-3.5) {I};
    \node at (2.4,-3.5) {I};
  \end{tikzpicture}
  +\; \dotsb.
\end{equation}
From the general results of section \eqref{recursion}, it immediately follows that the Schwinger--Keldysh amplitudes are entirely equivalent to (multiple disjoint copies of) the $S$-matrix. That is, there exists a quasi-isomorphism that expresses every Schwinger--Keldysh amplitude in terms of the $S$-matrix and vice versa. Applying  \eqref{eq:trivialretract} we obtain the perturbative  blobology of \cite{Caron-Huot:2023vxl}, while using \eqref{eq:tildefromog} yields the equivalent Schwinger--Keldysh approach.

\subsection{The one-particle stability axiom}\label{sec:stabilityaxiom}
In this section (and this section only), all propagators are oriented in such a way that time flows to the left and that a propagator always has positive energy flowing to the left.

The axiomatics of \cite{Caron-Huot:2023vxl} contain an ingredient not found in earlier discussions \cite{2008.11730,VELTMAN1963186}, which is the explicit requirement that the one-particle state be stable. Using this requirement, \cite[(4.55)]{Caron-Huot:2023vxl} asserts that
\begin{subequations}
  \begin{align}
    \begin{tikzpicture}[scale=0.7,baseline={([yshift=-0.5ex]current bounding box.center)}]
      \draw [thick,middlearrow={latex reversed}] (0,0) -- (1,1);
      \draw [thick,middlearrow={latex reversed}] (0,2)--  (1,1);
      \draw [thick,middlearrow={|}] (1,1) -- (2,1);
      \draw [thick,middlearrow={latex reversed}] (2,1) -- (3,0);
      \draw [thick,middlearrow={latex reversed}] (2.55,1.55) -- (3,2);
      \draw [thick,middlearrow={latex reversed}] (2,1) -- (2.55,1.55);
      \draw [thick,middlearrow={latex reversed}]  (2.1,2) -- (2.55,1.55);
      \node at (0,0) [anchor=north east] {II};
      \node at (0,2) [anchor=south east] {II};
      \node at (3,0) [anchor=north west] {I};
      \node at (2,2) [anchor=south east] {I};
      \node at (3,2) [anchor=south west] {I};
    \end{tikzpicture}
      & \ne0 \label{eq:doesnt_vanish} \\
    \begin{tikzpicture}[scale=0.7,baseline={([yshift=-0.5ex]current bounding box.center)}]
      \draw [thick,middlearrow={latex reversed}] (0,0) -- (1,1);
      \draw [thick,middlearrow={latex reversed}] (0,2)--  (1,1);
      \draw [thick,middlearrow={latex reversed}] (1,1) -- (2,1);
      \draw [thick,middlearrow={latex reversed}] (2,1) -- (3,0);
      \draw [thick,middlearrow={latex reversed}] (2.55,1.55) -- (3,2);
      \draw [thick,middlearrow={|}] (2,1) -- (2.55,1.55);
      \draw [thick,middlearrow={latex reversed}]  (2.1,2) -- (2.55,1.55);
      \node at (0,0) [anchor=north east] {II};
      \node at (0,2) [anchor=south east] {II};
      \node at (3,0) [anchor=north west] {II};
      \node at (2,2) [anchor=south east] {I};
      \node at (3,2) [anchor=south west] {I};
    \end{tikzpicture}
      & =0 \label{eq:does_vanish}
  \end{align}
\end{subequations}
even though the latter does not follow from the mere Feynman rules for Schwinger--Keldysh amplitudes.

\paragraph{Effective actions and $L_\infty$-algebras}
To explain this diagrammatically, we first recall what the \(L_\infty\)-algebra scattering amplitude formalism purports to compute: given an action functional \(S[\varphi]\) for a field \(\varphi\) (or, equivalently, the cyclic \(L_\infty\)-algebra \( \fL\) whose homotopy Maurer--Cartan action is \(S\)), it computes the \(S\)-matrix order-by-order \emph{provided} that the asymptotic in- and out-states are the solutions to the linearised classical equations of motion (i.e.\ the cohomology \(\operatorname H^\bullet( V)\)).

This assumption is crucial. It is a commonplace that the fields in terms of which an action is written are often \emph{not} the elementary fields of the asymptotic Fock spaces in the far past or far future. For instance, the spectrum of quantum chromodynamics (at zero temperature) does \emph{not} consist of quarks and gluons but rather of stable hadrons. Less dramatically, it may happen that one of the elementary fields that appear in the action may not be stable and decay. In these cases, one can in principle always write down an infrared action in terms of the stable elementary particles; by \emph{effective actions} we mean such an action (and not necessarily in the Wilsonian or one-particle-irreducible senses).

A necessary condition for an \(L_\infty\)-algebra to encode an effective action is that there be no nontrivial amplitudes with only one incoming or only one outgoing leg; in particular, all three-point amplitudes must vanish. For example, a phenomenological action might have an elementary field for the neutral pion and a vertex \(AA\pi^0\) that encodes the decay mode \(\pi^0\to\gamma\gamma\) of the neutral pion into a pair of photons, corresponding to a nontrivial three-point amplitude; this action is \emph{not} an effective action since it contains \(\pi^0\) (whose excitations are not part of the asymptotic Fock space) as an elementary field.

This necessary condition can be formalised as saying that, for the minimal model,
\begin{multline}
  \langle \varphi_0,\mu_n^{\cm}(\varphi_1,\dotsc,\varphi_n)\rangle^\cm=0\quad\text{if }{\Big\rvert}\left\{i\in\{0,\dotsc,n\}\middle| p_0^{(i)}>0\right\}{\Big\rvert} \le1\\\text{or }{\Big\rvert}\left\{i\in\{0,\dotsc,n\}\middle| p_0^{(i)}<0\right\}{\Big\rvert} \le1
\end{multline}
where \(p^{(i)}\) is the momentum of \(\varphi_i\) (all regarded as incoming by using cyclicity).\footnote{We ignore the degenerate case where the particle energy is zero for simplicity.} As a consequence,
\begin{equation}
  \mu_2^{\cm}=0.
\end{equation}

To reproduce the \(S\)-matrix-theoretic discussion in \cite{Caron-Huot:2023vxl}, we should only apply the \(L_\infty\)-algebraic formalism to effective actions. This then explains \eqref{eq:does_vanish}
since the upper right corner factorises into an amplitude in which one incoming particle decays into a number of particles which violates the above necessary condition.

On the other hand, what are we to make of the other diagram \eqref{eq:doesnt_vanish}? Superficially, the left half of the diagram appears as a decay process of a single particle; should it not also vanish?

The resolution is that the internal legs in the diagrams appearing in \cite{Caron-Huot:2023vxl} sometimes in fact do represent unstable particles, which are then treated as if they were stable by using the narrow-width approximation. On the other hand, when using the effective action, there are simply no lines corresponding to unstable particles; all `unstable particles' in fact represent multiparticle states that appear in cut loop diagrams. So \eqref{eq:doesnt_vanish} actually is short for
\begin{equation}
  \begin{tikzpicture}[scale=0.7,baseline={([yshift=-0.5ex]current bounding box.center)}]
    \draw [thick,middlearrow={latex reversed}] (0,0) -- (1,1);
    \draw [thick,middlearrow={latex reversed}] (0,2)--  (1,1);
    \draw [thick,middlearrow={|}] (1,1) -- (2,1);
    \draw [thick,middlearrow={latex reversed}] (2,1) -- (3,0);
    \draw [thick,middlearrow={latex reversed}] (2.55,1.55) -- (3,2);
    \draw [thick,middlearrow={latex reversed}] (2,1) -- (2.55,1.55);
    \draw [thick,middlearrow={latex reversed}]  (2.1,2) -- (2.55,1.55);
    \node at (0,0) [anchor=north east] {II};
    \node at (0,2) [anchor=south east] {II};
    \node at (3,0) [anchor=north west] {I};
    \node at (2,2) [anchor=south east] {I};
    \node at (3,2) [anchor=south west] {I};
  \end{tikzpicture}
  +
  \begin{tikzpicture}[scale=0.7,baseline={([yshift=-0.5ex]current bounding box.center)}]
    \draw [thick,middlearrow={latex reversed}] (0,0) -- (1,1);
    \draw [thick,middlearrow={latex reversed}] (0,2)--  (1,1);
    \draw [thick, middlearrow={|}] (1,1) to[in=-135, out =-45] (2,1);
    \draw [thick, middlearrow={|}] (1,1) to[in=135, out =45] (2,1);
    \draw [thick,middlearrow={latex reversed}] (2,1) -- (3,0);
    \draw [thick,middlearrow={latex reversed}] (2.55,1.55) -- (3,2);
    \draw [thick,middlearrow={latex reversed}] (2,1) -- (2.55,1.55);
    \draw [thick,middlearrow={latex reversed}]  (2.1,2) -- (2.55,1.55);
    \node at (0,0) [anchor=north east] {II};
    \node at (0,2) [anchor=south east] {II};
    \node at (3,0) [anchor=north west] {I};
    \node at (2,2) [anchor=south east] {I};
    \node at (3,2) [anchor=south west] {I};
  \end{tikzpicture}
  +    \begin{tikzpicture}[scale=0.7,baseline={([yshift=-0.5ex]current bounding box.center)}]
    \draw [thick,middlearrow={latex reversed}] (0,0) -- (1,1);
    \draw [thick,middlearrow={latex reversed}] (0,2)--  (1,1);
    \draw [thick] (1,1) -- (1.3,1);
    \draw [thick, middlearrow={|}] (1.3,1) to[in=90, out =90] (2,1);
    \draw [thick, middlearrow={|}] (1.3,1) to[in=-90, out =-90] (2,1);
    \draw [thick,middlearrow={latex reversed}] (2,1) -- (3,0);
    \draw [thick,middlearrow={latex reversed}] (2.55,1.55) -- (3,2);
    \draw [thick,middlearrow={latex reversed}] (2,1) -- (2.55,1.55);
    \draw [thick,middlearrow={latex reversed}]  (2.1,2) -- (2.55,1.55);
    \node at (0,0) [anchor=north east] {II};
    \node at (0,2) [anchor=south east] {II};
    \node at (3,0) [anchor=north west] {I};
    \node at (2,2) [anchor=south east] {I};
    \node at (3,2) [anchor=south west] {I};
  \end{tikzpicture}
  + \dotsb \neq 0
\end{equation}
where now all lines do represent stable particles, and where indeed the first diagram vanishes but the other ones need not.

\acknowledgments

D.S.H.J. and H.K. were supported by the Leverhulme Research Project Grant RPG-2021-092. The authors thank Christian Saemann, Martin Wolf, and the anonymous referee for helpful comments.

\newcommand\Yu{Yu}
\bibliographystyle{JHEP}
\bibliography{bigone,extra}
\end{document}